\newtheorem{theorem}{Theorem}[section]
\def\E{\mathbb{E}}
\begin{document}

\title{Optimizing the Medium Access Control in Multi-hop Wireless Networks}

\author{
Salman Malik\footnote{INRIA, Paris-Rocquencourt, France. Email: \texttt{salman.malik@inria.fr}}, 
Philippe Jacquet\footnote{Alcatel Lucent Bell Labs, Villarceaux, France. Email: \texttt{philippe.jacquet@alcatel-lucent.fr}}, 
Cedric Adjih\footnote{INRIA, Paris-Rocquencourt, France. Email: \texttt{cedric.adjih@inria.fr}}
}

\date{}

\maketitle

\begin{abstract}

We study the problem of geometric optimization of medium access control in multi-hop wireless network. We discuss the optimal placements of simultaneous transmitters in the network and our general framework allows us to evaluate the performance gains of highly managed medium access control schemes that would be required to implement these placements. In a wireless network consisting of randomly distributed nodes, our performance metrics are the optimum transmission range that achieves the most optimal tradeoff between the progress of packets in desired directions towards their respective destinations and the total number of transmissions required to transport packets to their destinations. We evaluate ALOHA based scheme where simultaneous transmitters are dispatched according to a uniform Poisson distribution and compare it with various grid pattern based schemes where simultaneous transmitters are positioned in specific regular patterns. Our results show that optimizing the medium access control in multi-hop network should take into account the parameters like signal-to-interference ratio threshold and attenuation coefficient. For instance, at {\em typical} values of signal-to-interference ratio threshold and attenuation coefficient, the most optimal scheme is based on triangular grid pattern and, under {\em no fading} channel model, the most optimal transmission range and network capacity are higher than the optimum transmission range and capacity achievable with ALOHA based scheme by factors of {\em two} and {\em three} respectively. Later on, we also identify the optimal medium access control schemes when signal-to-interference ratio threshold and attenuation coefficient approach the extreme values and discuss how fading impacts the performance of all schemes we evaluate in this article. 

\end{abstract}

\section{Introduction}

The goal of designing medium access control (MAC) schemes or protocols for wireless networks is to allow nodes to efficiently share the medium. Simultaneous transmitters should be selected in such a way that improves the {\em spatial reuse} in the network, {\it i.e.}, the shared medium should be used simultaneously by transmitters which are spatially separated in such a way that they do not interfere with each other's transmissions. In other words, permanent spatial clustering of simultaneous transmitters should never occur in a wireless network. In order to achieve this goal, MAC schemes generally use an exclusion rule to prevent nodes which are located near to each other from transmitting simultaneously.

The task of designing MAC schemes can be a challenging problem because of the difficulties posed by various types of wireless networks. To name a few: dynamic network topology, lack of any centralized control, energy constraints, channel characteristics, mobility of nodes, radio hardware limits, {\it etc.} are some of the issues faced by network and protocol designers. These challenges influence the design and protocol overhead of MAC schemes in various ways, {\it e.g.}:
\begin{itemize}
\item[--] protocol overhead to implement MAC schemes in various types of wireless networks, {\it e.g.}, intercommunication required between nodes for the distributed implementation of the protocol,
\item[--] scaling properties of the MAC scheme when the number of nodes in the network increases,
\item[--] energy efficiency, {\it e.g.}, minimizing the number of transmissions required for the functioning of the protocol, {\it etc.}
\end{itemize}

From the performance point of view, the question of what is the most optimal MAC scheme in multi-hop wireless network is extremely challenging to answer. Moreover, the implementation of this MAC scheme in a realistic wireless network may also require complicated protocol overhead. Therefore, an important issue in the design and implementation of MAC schemes is to understand the tradeoff between performance and protocol overhead costs. 

In this work, we will try to identify the most optimal placements of simultaneous transmitters in a wireless network and we will evaluate various MAC schemes that will implement these placements. These schemes represent particular points on the performance versus protocol overhead tradeoff curve and our analysis will provide useful insights for network and protocol designers. This article is organized in the following sections. In \S \ref{sec:related}, we will discuss some of the important related works. The network and transmission models are presented in \S \ref{range_sec:model} whereas the models of all MAC schemes are discussed in \S \ref{range_sec:mac_model}. We will discuss our performance metrics and also set a general framework to evaluate the performance of these MAC schemes in \S \ref{sec_performance_analysis}. The results of our evaluation are presented in \S \ref{range_sec:results}. We will also discuss the impact of the extreme values of signal-to-interference ratio (SIR) threshold and attenuation coefficient on the optimum transmission range of grid pattern based schemes in \S \ref{sec:asymp}. The impact of fading on the performance of MAC schemes will be discussed in \S \ref{sec:impact_fading} and the article concludes with insights and perspectives in \S \ref{sec:conclude}.

\section{Related Works}
\label{sec:related}

In one of the first analyses, the article \cite{Nelson:Kleinrock} studied the capacity of wireless network with slotted ALOHA and despite using a very simple geometric model for the reception of signals, the result is very similar to what can be obtained under a realistic signal-to-interference plus noise ratio (SINR) based interference model where a signal is received only if its SINR is above a desired threshold. Under a similar geometric model and assuming that all nodes are within range of each other, the article \cite{CSMA} evaluated carrier sense multiple access (CSMA) scheme and compared it with slotted ALOHA in terms of throughput (bit-rate). The article \cite{Bartek} used simulations to analyze CSMA and compared it with slotted and un-slotted ALOHA. For simulations, the authors of this article assumed Poisson distributed transmitters with density $\lambda$. Each transmitter sends packets to its assigned receiver located at a fixed distance of $a\sqrt{\lambda}$, for $a>0$. 

The authors of the article \cite{malik11} introduced the concept of {\em local capacity} for single-hop networks and defined it as the average information rate received by a node randomly located in the network. They showed that the local capacity in wireless network is optimized by the scheme based on triangular grid pattern and slotted ALOHA can achieve at least half of this optimal local capacity. They also showed that node coloring and CSMA can achieve almost the same local capacity as the most optimal scheme, by negligible difference. 

The articles \cite{Weber,Weber2} introduced the concept of {\em transmission capacity} which is defined as the maximum number of successful transmissions per unit area at a specified {\em outage probability}, {\it i.e.}, the probability that the SINR at the receiver is below a certain threshold. The authors of these articles studied the transmission capacity of ALOHA and code division multiple access (CDMA) schemes under the assumption that simultaneous transmitters form a homogeneous Poisson point process and, as in the article \cite{Bartek}, they also assumed that the receivers are located at fixed distance from their transmitters. The fact that the receivers are not a part of the network model and are located at a fixed distance from their transmitters is a simplification. An accurate model of wireless network should consider that the transmitters, transmit to receivers which are randomly located within their neighborhood. The article \cite{Weber} also analyzed the case where receivers are located at a random distance but it was assumed that the transmitters employ transmit power control such that the signal power at their intended receivers is some fixed value. 

Poisson point process can only accurately model an ALOHA based scheme where transmitters are independently and uniformly distributed in the network area. However, with MAC schemes based on exclusion rules, like node coloring or CSMA, modeling simultaneous transmitters as Poisson point process leads to an inaccurate representation of the distribution of signal level of interference. On the other hand, correlation between the location of simultaneous transmitters makes it extremely difficult to develop a tractable analytical model and derive the closed-form expression for the distribution of interference and probability of successful reception. Some of the proposed approaches are as follows. In Chapter $18$ of the book \cite{CSMA-Model}, the authors used a Mat\'ern point process for CSMA scheme whereas the article \cite{Busson} proposed to use Simple Sequential Inhibition ({\em SSI}) or an extension of {\em SSI} called {\em SSI$_k$} point process. In articles \cite{Guard,Guard2}, simultaneous interferers are modeled as Poisson point process and outage probability is obtained by excluding or suppressing some of the interferers in the guard zone around a receiver. The article \cite{Weber3} analyzed transmission capacity in networks with {\em general} node distributions under a restrictive hypothesis that density of interferers is very low and, asymptotically, approaches zero. The authors of this article derived bounds of high-SINR transmission capacity with ALOHA using Poisson point process and CSMA using Mat\'ern point process. Other related works include the analysis of single-hop throughput and capacity with slotted ALOHA, in networks with random and deterministic node placement, and TDMA, in one-dimensional line-networks only, in the article \cite{Haenggi}. 

Most of the works we have discussed so far, analyze the capacity of wireless network by taking into account only the simultaneous single-hop transmissions. However, when source and destination nodes are randomly distributed in the network area, packets are usually transported through multiple relays towards their respective destinations. Therefore, these analyses may not give an accurate assessment of the performance of wireless network using multi-hop communication. Computing the end-to-end capacity of wireless network is an extremely challenging task because of the possible impact of routing protocol on the hop length and spatial distribution of simultaneous transmitters, there have been significant attempts to overcome the above mentioned limitations. In the seminal work of the article \cite{Gupta:Kumar}, the authors used a spatial and temporal scheme for scheduling transmissions in the network and derived the scaling laws on the capacity when the number of nodes increases and approaches infinity. Gupta \& Kumar also introduced the concept of {\em transport capacity} which measures the sum of the end-to-end throughput of the network multiplied by the end-to-end distance and showed that if nodes are optimally located and traffic patterns are also optimally assigned, the network transport capacity scales as $O(\sqrt{n})$ or, in other words, bit-rate per source destination pair in multi-hop networks is upper bounded by $O(1/\sqrt{n})$. However, if nodes are randomly distributed, they showed that the throughput obtainable by each node for a randomly chosen destination drops to $O(1/\sqrt{n\log n})$ where the $\log n$ factor comes from the fact that the network must be connected or, in other words, at least one route or path must exist between all pairs of nodes {\em with high probability} {\em(w.h.p.)} which approaches one as $n$ approaches infinity. The articles \cite{scaling,scaling3} have shown that, even in case of randomly distributed nodes, the throughput capacity in wireless network can be increased to an upper bound of $O(\sqrt{n})$ and the $\log n$ factor can be overcome by constructing a complex system of {\em highways} of nodes to transport data and using different transmission ranges for different transmissions. In other related works, the article \cite{SR-ALOHA} gave a detailed analysis on the optimal probability of transmission for ALOHA which optimizes the spatial density of progress, {\it i.e.}, the product of simultaneously successful transmissions per unit of space by the average range of each transmission. Instead of considering the average range of transmission, the authors of the article \cite{weber08} proposed a variant where each transmitter selects its longest feasible edge and then identifies a packet in its buffer whose next hop is the associated receiver. Similarly, the article \cite{Weber4} evaluated the random transport capacity of wireless network without taking into account any particular routing protocol and under the strong assumptions that simultaneous interferers form a Poisson point process and relays are equally spaced on the straight line between source and destination. 

An important aspect of multi-hop communication is the transmission range of a transmitter as it has an impact on the hop-length as well as the probability of successful transmission: both factors have critical impact on the capacity of multi-hop wireless network. There have been many studies on the transmission range under various constraints. For example, the article \cite{deng-han-varshney2007} studied the transmission range that achieves the most economical energy usage in an ad hoc network of uniformly distributed nodes. The works \cite{santi-blough2003,santi-2005} analyzed the critical transmission range required for connectivity in stationary and mobile wireless ad hoc networks. The article \cite{takagi-kleinrock1984} determined the optimal transmission range of slotted ALOHA and CSMA schemes which maximizes the expected one-hop progress of a packet while considering the tradeoff between the probability of successful transmission and the progress of the packet. The authors assumed that all nodes use same transmit power and the range of transmission is controlled by tuning the density of simultaneous transmitters via parameters like probability of transmission (in case of ALOHA) and carrier sense threshold (in case of CSMA). The article \cite{Zorzi2} determined the normalized optimum transmission range under the assumption that simultaneous interferers form a Poisson point process. In an important analysis in the article \cite{gomez-cambell2004}, the authors studied the impact of variable transmission range, by controlling the transmit power, on network connectivity, capacity and energy efficiency. They showed that routing in networks where transmitters have same transmission range can only achieve half of the traffic carrying capacity of networks where transmitters have variable transmission range. 

\section{Models and Assumptions}
\label{range_sec:model}

In this section, we will discuss the models that we will use throughout this article.

\subsection{Timing Model}

MAC schemes in wireless networks can be divided into two categories: continuous time access and slotted access. In order to simplify our analysis and cope with the difficulties introduced by the continuous time medium access, our main focus in this work will be on the slotted medium access although many of our results can be applied to continuous time medium access as well. 

In slotted medium access, time is divided into slots of equal duration and we consider an ideal timing model, operating with the following assumptions:
\begin{itemize}
\item[--] nodes are synchronized,
\item[--] transmissions begin at the start of the slot,
\item[--] packets are of equal length and 
\item[--] packet transmission time is equal to one slot.
\end{itemize}
Note that, in our timing model, we have assumed that the propagation delays, {\it i.e.}, the time it takes for the signals to travel from the transmitters to the receivers are negligible. 

\subsection{Network Model}

We will consider a wireless network where nodes are uniformly distributed with density $\nu$ in a two-dimensional $(2D)$ area ${\cal A}$ with center at origin $(0,0)$. We denote the set of all nodes in the network by ${\cal N}$ and $n$ is the number of elements in this set. 

Note that, in our theoretical analysis, the set ${\cal N}$ is infinite but distributed with a uniform density $\nu$ in an infinite plane $\mathbb{R}^2$, {\it i.e.}, ${\cal A}=\mathbb{R}^2$. In slotted medium access, at any given slot, simultaneous transmitters in the network are distributed like a set of points 
$$
{\cal S}=\{{\bf z}_1,{\bf z}_2,\ldots,{\bf z}_k,\ldots\}~,
$$ 
where \mbox{${\bf z}_i\in {\cal A}$} is the location of transmitter $i$ and \mbox{${\bf z}_i=(x_i,y_i)$}. The spatial distribution of simultaneous transmitters, {\it i.e.}, the set ${\cal S}$ depends on the MAC scheme employed by nodes. Therefore, we will not adopt any {\em universal} model for the locations of simultaneous transmitters and only assume that, in all slots, the set ${\cal S}$ has a spatially homogeneous density equal to $\lambda$. 

\subsection{Propagation Model}
\label{schemes_sec:pro_model}

The channel gain between an arbitrary transmitter $i$ and a receiver located at point \mbox{${\bf z}\in {\cal A}$} is denoted by $\gamma_i({\bf z})$, such that the received power at the receiver is $P_i\gamma_i({\bf z})$. 

We have the following expression for the channel gain $\gamma_i({\bf z})$
\begin{equation}
\gamma_i({\bf z})=\frac{F_i({\bf z})}{\| {\bf z}-{\bf z}_i\|^{\alpha}}~,
\label{schemes_eq:channel_gain}
\end{equation} 
where $F_i({\bf z})$ is the {\em fading factor} which is a random variable that represents the variation in channel gain between transmitter $i$ and the receiver located at point ${\bf z}$, \mbox{$\|{\bf z}-{\bf z}_i\|$} is the Euclidean distance between the transmitter and the receiver and \mbox{$\alpha>2$} is the attenuation coefficient.  

The variation in channel gain $F_i({\bf z})$ has two components:
\begin{itemize}
\item[--] {\em fast} fading and 
\item[--] {\em slow} fading or shadowing effects. 
\end{itemize} 
We assume that the value of $F_i({\bf z})$ remains constant over the duration of a slot. However, it varies from slot to slot and, for the signal from transmitter $i$ to the receiver located at point ${\bf z}$, there will be correlation between these varying values because of the shadowing effects and, in this case, the mean value of $F_i({\bf z})$ will be less than one. In this article, we will ignore the shadowing effects and we will consider that $F_i({\bf z})$ is an {\em independent and identically distributed} ({\em i.i.d.}) random variable which is independent of the position of the transmitter and the receiver. We will also assume that it is independent of the distance between them which is a simplification. Therefore, we can drop the subscript and the argument {\bf z} and denote \mbox{$F\equiv F_i({\bf z})$}. For the probability distribution of $F$, we will consider the following two cases:

\begin{enumerate}
\item {\em no fading:} $F$ is constant and equal to one,
\item {\em Rayleigh fading:} $F$ is an {\em i.i.d.} random variable with an exponential distribution of mean equal to one. We will discuss this case in detail in \S \ref{sec:impact_fading}.
\end{enumerate}

\subsection{Transmission Model}
\label{schemes_sec:tx_model}

We consider that all nodes are equipped with only one antenna and the transmission from transmitter $i$ to a receiver located at point ${\bf z}$ is successful if and only if the following condition is satisfied
\begin{equation}
\frac{P_i\gamma_i({\bf z})}{N_0+\sum\limits_{j\neq i,{\bf z}_j\in{\cal S}}P_j\gamma_j({\bf z})}\geq \beta~,
\label{schemes_eq:sinr_condition}
\end{equation}
where $N_0$ is the background noise (ambient/thermal) power which is assumed equal to zero and $\beta$ is the minimum SIR threshold required for successfully decoding the packet. 

We consider that all transmitters transmit at the same unit nominal transmit power, {\it i.e.}, 
$$
P_i=1~,
$$ 
for \mbox{$i=0,1,2,\ldots,k,\ldots$}.

Note that, here we have ignored the inter-symbol interference (ISI) caused by echoes which may allow the above condition to be satisfied but the packet may still be non-decodable at the receiver. 

The ISI can be mitigated by using {\em guard periods} and/or adaptive equalization at the receivers however, this is beyond the scope of this article and we will ignore its impact on whether the transmission can be received successfully or not.

Therefore, the SIR of transmitter $i$ at any point ${\bf z}$ is given by
\begin{equation}
S_i({\bf z})=\frac{\gamma_i({\bf z})}{\sum\limits_{j\neq i,{\bf z}_j\in{\cal S}}\gamma_j({\bf z})}~,
\label{schemes_eq:sinr}
\end{equation} 
where $\gamma_i({\bf z})$ is given by \eqref{schemes_eq:channel_gain}.

\subsection{Node Model}
\label{schemes_sec:node_model}

We consider that each node has a buffer which contains packets and a node transmits a packet from its buffer when it is allowed by its MAC layer to transmit on the channel. However, if no packet is available in the buffer, the node transmits a HELLO packet or a {\em junk} packet of length such that its transmission time is equal to the duration of a slot.

Note that, in case of multi-hop networks, each node generates packets in its buffer which are to be transmitted to a randomly selected destination node and packets may have to be routed through multiple relays to reach their destinations. Therefore, the buffers may also carry the routed packets. 

\section{Medium Access Control Schemes}
\label{range_sec:mac_model}

Within slotted medium access category, we will only distinguish slotted ALOHA and grid pattern based MAC schemes in this article. In this section, we will discuss the models of these schemes that we will use in our analysis throughout this article.

\subsection{Slotted ALOHA Scheme}
\label{schemes_sec:aloha}

In slotted ALOHA based MAC scheme, nodes do not use any complicated managed transmission scheduling and transmit their packets independently of each other with a certain medium access probability. 

For theoretical analysis, the set of simultaneous transmitters ${\cal S}$ in each slot can be given by a uniform Poisson distribution of mean equal to $\lambda$ transmitters per unit square area as is also the case in articles \cite{SR-ALOHA,Jacquet:2009,Weber2}. 

\subsection{Grid Pattern Based Medium Access Control Schemes}
\label{schemes_sec:grid_pattern}

It can be argued intuitively that a good MAC scheme should optimize the spatial reuse by positioning the simultaneous transmitters in a regular grid pattern. Evaluation of grid pattern based MAC schemes is interesting as they can provide an {\em ideal} upper-bound on the performance, {\it e.g.}, under {\em no fading} channel model. However, their implementation is difficult because of the following reasons:
\begin{itemize}
\item[--] limitations introduced by wave propagation characteristics like shadowing effects and
\item[--]  irregular distribution of nodes (however, simultaneous transmitters can be base stations in cellular networks and their placement can be optimized). 
\end{itemize}
In wireless networks, location aware nodes may be useful and if the node density is high, simultaneous transmitters can be selected in such a way that they closely resemble an ideal grid pattern. However, specification of a distributed MAC scheme that would allow grid pattern based positioning of simultaneous transmitters is beyond the scope of this article. 

\begin{figure*}[!t]
\centering
\psfrag{a}{$\sqrt{3}d$}
\psfrag{b}{$2d$}
\psfrag{c}{$d$}
\psfrag{e}{$k_1d$}
\psfrag{f}{$k_2d$}
\hspace{-1 cm}
\includegraphics[scale=0.5]{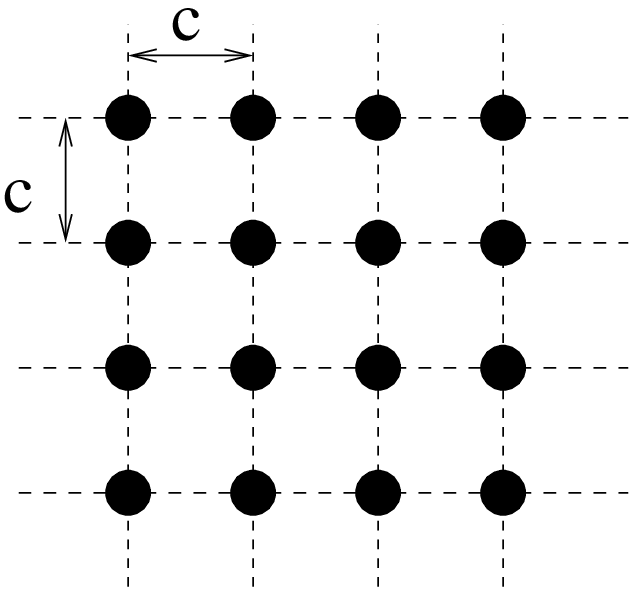}
\hspace{+2 cm}
\includegraphics[scale=0.5]{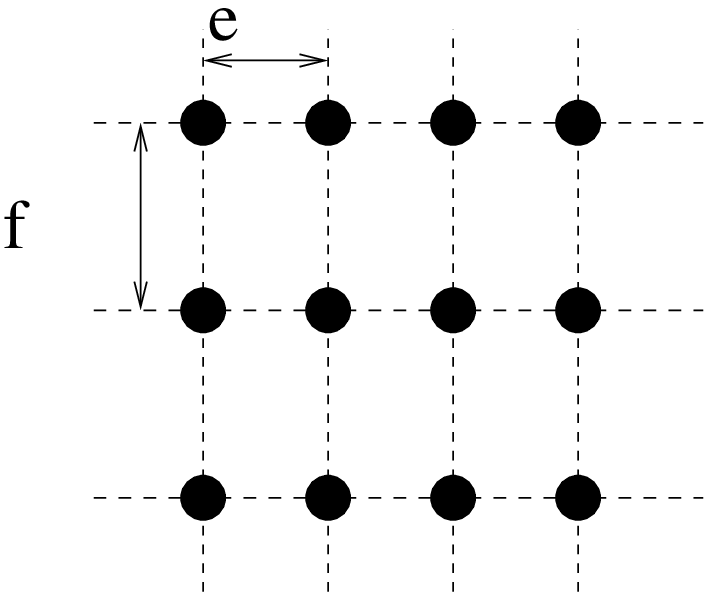}

\hspace{-1 cm}
\includegraphics[scale=0.5]{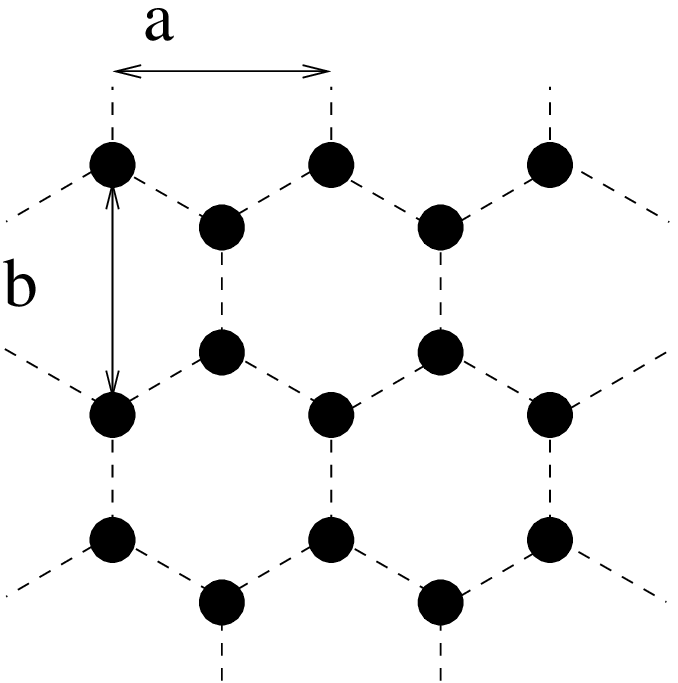}
\hspace{+2 cm}
\includegraphics[scale=0.5]{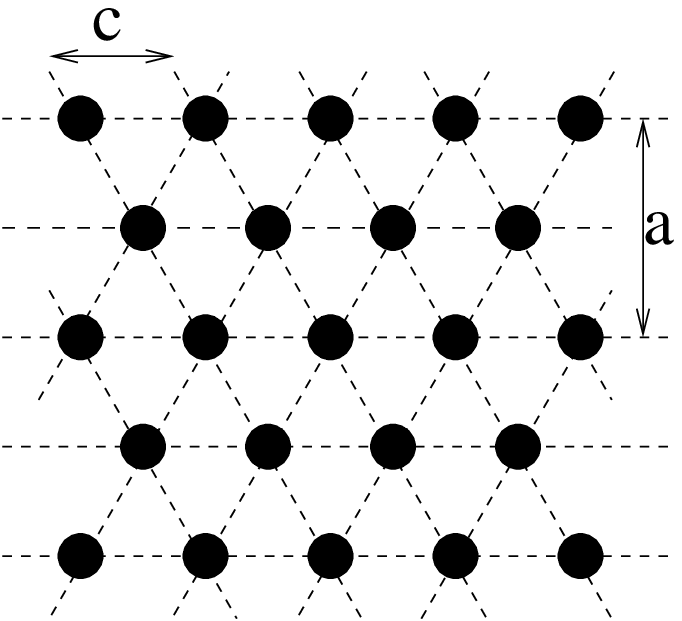}
\caption{Square, Rectangular, Hexagonal and Triangular grid patterns. Note that $\frac{k_1}{k_2}<1$.}
\label{schemes_fig:grid_layouts}
\end{figure*}

We consider the following model for the grid pattern based schemes. The set of nodes ${\cal N}$ is divided into $k$ partitions such that each partition is a grid (or closely resembles a grid) with parameter $d$. The transmission process is a {\em round robin} scheduling on the $k$ partitions, {\it i.e.}, in each slot, a partition is randomly selected and all nodes in that partition transmit simultaneously. Note that the parameter $d$ defines the minimum distance between neighboring transmitters and it can be derived from the hop-distance parameter of a typical TDMA protocol. In our analysis, we have covered grid layouts of square, rectangle, hexagonal and triangle which are shown in Fig. \ref{schemes_fig:grid_layouts}.

Instead of handling the complication of the above described pattern fitting process that converges very slowly, we have devised the following strategy to evaluate grid pattern based schemes. We consider that the set ${\cal S}$ is a set of points arranged in a regular grid pattern. The basic mechanism for the construction of the set of simultaneous transmitters ${\cal S}$ under a grid pattern based scheme, in a network with randomly distributed nodes, is outlined in the following steps. 
\begin{enumerate}
\item Initialize ${\cal M}={\cal N}$ and ${\cal S}=\emptyset$.
\item Randomly select a node $i$ from ${\cal M}$ and add it to the set ${\cal S}$, {\it i.e.},
$$
{\cal S}={\cal S}\cup\{{\bf z}_i\}~.
$$ 
\item Remove node $i$ from the set ${\cal M}$.
\item From the position of node $i$, construct a {\em virtual} grid pattern with parameter $d$. This virtual grid pattern should be one of the patterns we have mentioned above and it is also randomly rotated about the location of node $i$.
\item All nodes which overlap the points of the virtual grid, constructed in the above step, shall be added to the set ${\cal S}$. 
\end{enumerate}
We can assume that the density of nodes in the network is high and the points of the virtual grid always overlap existing nodes. Note that the above steps are repeated in each slot to construct the set of simultaneous transmitters ${\cal S}$ and, in each slot, the virtual grid pattern is the same {\em modulo} a translation and/or rotation. Note that wireless networks of grid topologies are studied in literature, {\it e.g.}, the articles \cite{Liu:Haenggi,Hong:Hua} compared these networks with network consisting of randomly distributed nodes. In contrast to these works, we assume that wireless network consists of randomly distributed nodes and only the simultaneous transmitters form a regular grid pattern.

\section{Performance Analysis}
\label{sec_performance_analysis}

In this section, we will evaluate the performance of the MAC schemes that we discussed in \S \ref{range_sec:mac_model} and we will only use the {\em no fading} channel model. Fading may also influence the performance of these schemes and we will evaluate its impact in detail in \S \ref{sec:impact_fading} where we will use the {\em Rayleigh fading} channel model.

We call the reception area of transmitter $i$ at location \mbox{${\bf z}_i\in {\cal A}$}, the area of the plane ${\cal A}_{i}({\cal S},\beta,\alpha)$ where this transmitter is received with SIR at least equal to $\beta$.

Under {\em no fading} channel model, ${\cal A}_{i}({\cal S},\beta,\alpha)$ can be defined as
\begin{equation}
\mathcal{A}_i({\cal S},\beta,\alpha)=\{\mathbf{z}\,:\, 
\gamma_i({\bf z})\geq\beta\sum_{j\neq i,{\bf z}_j\in{\cal S}}\gamma_j({\bf z})\}~.
\label{local_capacity_eq:rx_area}
\end{equation}
We recall from \S \ref{schemes_sec:pro_model} that the background noise (ambient/thermal) power $N_0$ is assumed negligible and equal to zero and the channel gain $\gamma_i({\bf z})$ under {\em no fading} channel model is given by
$$
\gamma_i({\bf z})=\frac{1}{\|\mathbf{z}-\mathbf{z}_{i}\|^{\alpha}}~.
$$

In the article \cite{Gupta:Kumar}, Gupta \& Kumar showed that under general settings, the efficient radius of transmission is \mbox{$r=r_{\lambda}=\frac{\kappa}{\sqrt{\lambda}}$}, for some \mbox{$\kappa>0$} which depends on the protocol, signal propagation and demodulation, {\it etc}. 

If $\overline{L}$ is the distance between a source and its destination node, the number of relays required to deliver a packet to the destination is equal to \mbox{$\lceil\frac{\overline{L}}{r}\rceil\sim\overline{L}\frac{\sqrt{\lambda}}{\kappa}$}. If all packets were to be moved over an average distance $\overline{L}$, the average number of times each packet would be repeated is given by \mbox{$\frac{\overline{L}}{r p(\lambda,r,\beta,\alpha)}= \overline{L}\frac{\sqrt{\lambda}}{\kappa p(\lambda,r,\beta,\alpha)}$} where $p(\lambda,r,\beta,\alpha)$ is the probability of successfully transmitting a packet at distance $r$ under given values of $\beta$ and $\alpha$. 

Therefore, the density of useful network capacity would be 
$$
\lambda\frac{r p(\lambda,r,\beta,\alpha)}{\overline{L}}=\sqrt{\lambda}\frac{\kappa p(\lambda,r,\beta,\alpha)}{\overline{L}}
$$ 
packets per slot.

In this article, we will evaluate the following parameters in a multi-hop wireless network:
\begin{itemize}
\item[--] the minimized number of transmissions required to transport a packet through multiple relays before delivery to its final destination located at unit distance from the source, {\it i.e.}, $\frac{1}{r p(\lambda,r,\beta,\alpha)}$ with $\lambda=1$ and
\item[--] the normalized optimum transmission range ${r_1}$ that minimizes the above quantity. 
\end{itemize}

Note that \mbox{$r_1=\sqrt{\lambda}r_\lambda$} as it is invariant for any homothetic transformation of the set of simultaneous transmitters ${\cal S}$. We will measure these parameters for slotted ALOHA and compare its results with various grid pattern based schemes. 

\subsection{Slotted ALOHA Scheme}
\label{range_sec:aloha}

We discussed in \S \ref{schemes_sec:aloha} that in case of slotted ALOHA scheme, the set of simultaneous transmitters in each slot can be given by a uniform Poisson distribution of mean equal to $\lambda$ transmitters per unit square area. 

Using \eqref{schemes_eq:sinr_condition} with \mbox{$N_0=0$} and $P_i=1$ (for \mbox{$i=0,1,\ldots,k,\ldots$}), we can write
$$
\|{\bf z}-{\bf z}_i\|^{-\alpha}\geq \beta\sum_{j\neq i,{\bf z}_j\in {\cal S}}\|{\bf z}-{\bf z}_j\|^{-\alpha}~,
\label{local_capacity_eq:snr_relation_at_z}
$$
or 
$$
{\cal W}({\bf z},\{{\bf z}_i\})\geq \beta {\cal W}({\bf z},{\cal S}-\{{\bf z}_i\})~,
$$ 
where 
$$
{\cal W}({\bf z},{\cal S})=\sum_{{\bf z}_j\in {\cal S}}({\bf z})\|{\bf z}-{\bf z}_j\|^{-\alpha}=\sum_{{\bf z}_j\in {\cal S}}\|{\bf z}-{\bf z}_j\|^{-\alpha}~.
$$ 

\subsubsection{Evaluation of Transmission Range}

\begin{figure}[!t]
\centering
\includegraphics[scale=0.65]{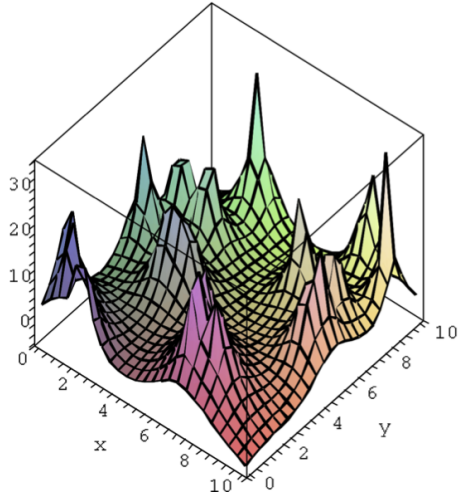}
\caption{Signal Levels (in dB's) for a random network with attenuation coefficient $\alpha=2.5$.
\label{local_capacity_fig:signal_levels}}
\end{figure}

\begin{figure}[!t]
\centering
\includegraphics[clip,scale=0.65]{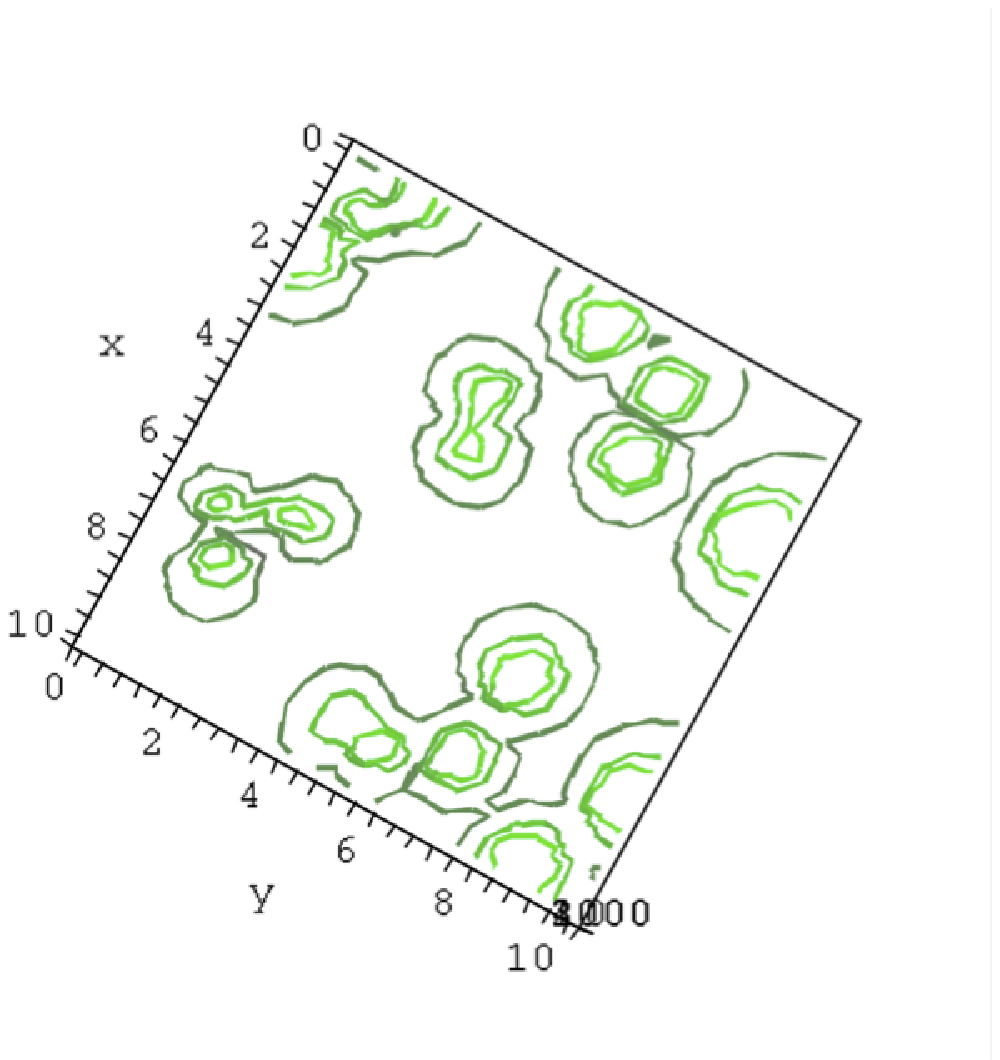}
\caption{Distribution of reception areas for various value of SIR. $\beta=1,4,10$ for situation of Fig. \ref{local_capacity_fig:signal_levels}.
\label{local_capacity_fig:reception_area}}
\end{figure}

Figure \ref{local_capacity_fig:signal_levels} shows the function ${\cal W}({\bf z},{\cal S})$ for ${\bf z}$ varying in the plane with ${\cal S}$, an arbitrary set of Poisson distributed transmitters. Figure \ref{local_capacity_fig:signal_levels} uses \mbox{$\alpha=2.5$}. It is clear that closer the receiver is to the transmitter, larger is the SIR. For each value of $\beta$ we can draw an area, around each transmitter, where its signal can be received with SIR greater or equal to $\beta$. Figure \ref{local_capacity_fig:reception_area} shows reception areas for the same set ${\cal S}$, as in Fig. \ref{local_capacity_fig:signal_levels}, for various values of $\beta$. As can be seen, the reception areas do not overlap for \mbox{$\beta>1$} since there is only one dominant signal. For each value of $\beta$ we can draw, around each transmitter, the area where its signal is received with SIR greater or equal to $\beta$. Our aim is to find the probability of successfully transmitting a packet $p(\lambda,r,\beta,\alpha)$ at distance $r$ and how it is a function of $\lambda$, $\beta$ and $\alpha$.

\begin{theorem}
In case of no fading channel model, the probability of successfully transmitting a packet to a receiver at distance $r$ with slotted ALOHA scheme is
\begin{align}
p(\lambda,r,\beta,\alpha)&=\Pr\left({\cal W}(\lambda)<\frac{r^{-\alpha}}{\beta}\right)\notag\\
&=\underset{n\geq0}{\sum}\frac{(-C\lambda)^{n}}{n!}\frac{\sin(\pi n\gamma)}{\pi}\Gamma(n\gamma){\left(\frac{r^{-\alpha}}{\beta}\right)}^{-n\gamma}~,
\label{range_eq:prob_aloha}
\end{align}
where the expression of $C$ in case of $2D$ map is 
$$
C=\pi\Gamma(1-\gamma)~,
$$ 
\mbox{$\gamma=\frac{2}{\alpha}$} and $\Gamma(.)$ is the Gamma function.
\end{theorem}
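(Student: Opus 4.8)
The plan is to recognize the total interference
$$
{\cal W}(\lambda)={\cal W}({\bf z},{\cal S}-\{{\bf z}_i\})=\sum_{j\neq i,{\bf z}_j\in{\cal S}}\|{\bf z}-{\bf z}_j\|^{-\alpha}
$$
as a sum over the points of a homogeneous Poisson point process of intensity $\lambda$, and to compute its distribution via the Laplace transform. Because the Poisson process is stationary, I may place the intended receiver at the origin, so ${\cal W}(\lambda)=\sum_{{\bf z}_j\in{\cal S}}\|{\bf z}_j\|^{-\alpha}$ is a shot-noise field. The success event in \eqref{local_capacity_eq:rx_area} is exactly $r^{-\alpha}\ge\beta\,{\cal W}(\lambda)$, i.e. ${\cal W}(\lambda)<r^{-\alpha}/\beta$, which is the first line of \eqref{range_eq:prob_aloha}; so the whole task reduces to identifying the law of ${\cal W}(\lambda)$.

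The key computation is the Laplace transform. First I would apply the Campbell/Lévy–Khintchine formula for Poisson shot noise: for $s\ge 0$,
$$
\E\!\left[e^{-s{\cal W}(\lambda)}\right]
=\exp\!\left(-\lambda\int_{\mathbb{R}^2}\bigl(1-e^{-s\|{\bf z}\|^{-\alpha}}\bigr)\,d{\bf z}\right).
$$
Passing to polar coordinates and substituting $u=\rho^{-\alpha}$ (so $\rho=u^{-1/\alpha}$, $\gamma=2/\alpha$) turns the exponent into a one-dimensional integral that evaluates to a Gamma function, yielding
$$
\E\!\left[e^{-s{\cal W}(\lambda)}\right]=\exp\!\bigl(-C\lambda\,s^{\gamma}\bigr),\qquad C=\pi\Gamma(1-\gamma).
$$
This exhibits ${\cal W}(\lambda)$ as a one-sided stable random variable of index $\gamma=2/\alpha\in(0,1)$ (note $\gamma<1$ since $\alpha>2$, which is exactly what makes the integral converge). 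I expect this scaling-and-Gamma step to be the main technical obstacle: one must check convergence of $\int_0^\infty(1-e^{-su^{-\alpha/2}\cdots})$ carefully at both endpoints (the factor $1-e^{-s\rho^{-\alpha}}\sim s\rho^{-\alpha}$ near infinity gives integrability because $\gamma<1$, and the bounded integrand handles $\rho\to0$), and identify the constant precisely with $\pi\Gamma(1-\gamma)$.

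Finally I would invert the transform. Rather than a contour integral, the cleanest route is to expand the stable law's distribution function as its convergent series: writing $t=r^{-\alpha}/\beta$, I would recover
$$
\Pr\bigl({\cal W}(\lambda)<t\bigr)
=\sum_{n\ge0}\frac{(-C\lambda)^{n}}{n!}\,\frac{\sin(\pi n\gamma)}{\pi}\,\Gamma(n\gamma)\,t^{-n\gamma},
$$
which is the stated \eqref{range_eq:prob_aloha}. The factor $\sin(\pi n\gamma)\Gamma(n\gamma)/\pi$ is the standard series coefficient for the stable CDF and arises by inverting $e^{-C\lambda s^{\gamma}}$ term-by-term using the reflection formula $\Gamma(n\gamma)\Gamma(1-n\gamma)=\pi/\sin(\pi n\gamma)$, or equivalently by residue summation; note the $n=0$ term vanishes through $\sin(0)=0$, correctly giving $\Pr=0$ at $t=0$. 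Matching this series to the claimed expression completes the proof.
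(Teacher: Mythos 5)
Your overall route is the same as the paper's: you reduce success to the event ${\cal W}(\lambda)<\frac{r^{-\alpha}}{\beta}$, compute the Laplace transform of the Poisson shot noise, obtain $\tilde{w}(\theta,\lambda)=\exp(-C\lambda\theta^{\gamma})$ with $C=\pi\Gamma(1-\gamma)$, and invert term by term. The paper derives the Campbell/L\'evy--Khintchine step from scratch by splitting the plane into independent infinitesimal sub-rectangles, whereas you invoke the formula directly; these are equivalent. Likewise, your reflection-formula identity $\Gamma(n\gamma)\Gamma(1-n\gamma)=\pi/\sin(\pi n\gamma)$ is exactly what the paper's contour-bending evaluation of $\frac{1}{2i\pi}\int_{-i\infty}^{+i\infty}\theta^{n\gamma-1}e^{\theta x}d\theta$ produces, so the two inversions are the same computation in different packaging, and your convergence checks for the radial integral (integrand $\sim s\rho^{-\alpha}$ at infinity, bounded near the origin, with $\gamma<1$ since $\alpha>2$) are sound.

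There is, however, one genuine error at the end: your treatment of the $n=0$ term. You claim it ``vanishes through $\sin(0)=0$, correctly giving $\Pr=0$ at $t=0$.'' But $\sin(\pi n\gamma)\Gamma(n\gamma)$ is an indeterminate product of the form $0\cdot\infty$ at $n=0$: since $\Gamma(n\gamma)\sim\frac{1}{n\gamma}$ and $\frac{\sin(\pi n\gamma)}{\pi}\sim n\gamma$, the limit is $1$, and the paper explicitly adopts the convention $\frac{\sin(\pi n\gamma)}{\pi}\Gamma(n\gamma)=1$ for $n=0$. This constant term equal to $1$ is essential: it is what makes $\Pr({\cal W}(\lambda)<t)\to 1$ as $t\to\infty$, while the $n\geq 1$ terms decay like $t^{-n\gamma}$; under your reading the series would tend to $0$ and be off by $1$ everywhere. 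Moreover, the expansion is in descending powers of $t$ and cannot be evaluated termwise at $t=0$ (every $n\geq 1$ term blows up there), so the boundary value $\Pr=0$ at $t=0$ is not obtainable from this representation --- the vanishing of the stable law near the origin emerges only from the full sum. Correct the $n=0$ coefficient to $1$ and your argument matches the paper's proof.
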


\begin{proof}
${\cal W}({\bf z},{\cal S})$ depends on ${\cal S}$ and hence is also a random variable. The random variable ${\cal W}({\bf z},{\cal S})$ has a distribution which is invariant by translation and therefore does not depend on ${\bf z}$. Therefore, we denote ${\cal W}(\lambda)\equiv{\cal W}({\bf z},\lambda)$. Let $w({\cal S})$ be its density function. The set ${\cal S}$ is given by a $2D$ Poisson process process with intensity $\lambda$ transmitters per slot per unit square area and Laplace transform of $w({\cal S})$, denoted by $\tilde{w}(\theta,\lambda)$, can be computed exactly, {\it i.e.}, 
$$
\tilde{w}(\theta,\lambda)=\E[\exp(-\theta {\cal W}(\lambda))]~.
$$ 

We split the network area into small sub-rectangles of size \mbox{$dx\times dy$}. As the set ${\cal S}$ is distributed according to the Poisson point process, the contributions of all sub-areas are independent and the Laplace transform is equal to the product of the Laplace transforms of the contributions of all sub-areas. The contribution of a subarea at distance $r$ from ${\bf z}$ is
$$
\exp(-\lambda dx dy)+(1-\exp(-\lambda dx dy))e^{-\theta r^{-\alpha}}~.
$$
This expression means that with probability $\exp(-\lambda dx dy)$, there is no transmitter in the sub-rectangle and with probability $(1-\exp(-\lambda dx dy))$, there is a transmitter. Note that we consider only one transmitter as $dx$ and $dy$ are infinitesimally small and the contribution of this transmitter to the Laplace transform is equal to $e^{-\theta r^{-\alpha}}$. 

Using the first order estimate,
\begin{align}
\exp(-\lambda dx dy)&=1-\lambda dx dy + O(dx^2 dy^2)\notag\\
\intertext{we obtain}
\log \tilde{w}(\theta,\lambda)&=\int\int \lambda (e^{-\theta r^{-\alpha}}-1)dx dy\notag\\
&=2\pi\lambda \int (e^{-\theta r^{-\alpha}}-1) r dr.\notag\\
\intertext{Changing variable $r^{-\alpha}=x$ which leads to $\frac{dx}{x}=-\alpha\frac{dr}{r}$, we get}
\log \tilde{w}(\theta,\lambda)&=-\frac{1}{\alpha}2\pi \lambda \int (e^{-x}-1) \frac{x^{-\frac{2}{\alpha}-1}}{\theta^{-\frac{2}{\alpha}-1}}\frac{dx}{\theta}\notag\\
&=-\pi \lambda \Gamma\left(1-\frac{2}{\alpha}\right)\theta^{\frac{2}{\alpha}}\notag~.
\intertext{Therefore, we get}
\tilde{w}(\theta,\lambda)&=\exp(-\pi \lambda \Gamma(1-\gamma)\theta^{\gamma})~.
\label{local_capacity_eq:laplace}
\end{align}
Note that, in all cases, $\tilde{w}(\theta,\lambda)$ is of the form $\exp(-\lambda C\theta^{\gamma})$ and $C=\pi\Gamma(1-\gamma)$. 

From \eqref{local_capacity_eq:laplace} and by applying the inverse Laplace transformation, we get
$$
\Pr({\cal W}(\lambda)<x)=\frac{1}{2i\pi}\underset{-i\infty}{\overset{+i\infty}{\int}}\frac{\tilde{w}(\theta,\lambda)}{\theta}e^{\theta x}d\theta~,
$$
Expanding $\tilde{w}(\theta,\lambda)=\underset{n\geq0}{\sum}\frac{(-C\lambda)^{n}}{n!}\theta^{n\gamma}$, we get
$$
\Pr({\cal W}(\lambda)<x)=\frac{1}{2i\pi}\underset{n\geq0}{\sum}\frac{(-C\lambda)^{n}}{n!}\underset{-i\infty}{\overset{+i\infty}{\int}}\theta^{n\gamma-1}e^{\theta x}d\theta~,
$$
By bending the integration path towards negative axis
\begin{equation*}
\frac{1}{2i\pi}\underset{-i\infty}{\overset{+i\infty}{\int}}\theta^{n\gamma-1}e^{\theta x}d\theta=\frac{e^{i\pi n\gamma}-e^{-i\pi n\gamma}}{2i\pi}\underset{0}{\overset{\infty}{\int}}\theta^{n\gamma-1}e^{-\theta x}d\theta=\frac{\sin(\pi n\gamma)}{\pi}\Gamma(n\gamma)x^{-n\gamma}~,
 \end{equation*}
we get,
\begin{equation}
\Pr({\cal W}(\lambda)<x)=\underset{n\geq0}{\sum}\frac{(-C\lambda)^{n}}{n!}\frac{\sin(\pi n\gamma)}{\pi}\Gamma(n\gamma)x^{-n\gamma}\label{local_capacity_eq:signal_pd}~,
\end{equation}
with the convention that $\frac{\sin(\pi n \gamma)}{\pi}\Gamma(n\gamma)=1$ for $n=0$.
\end{proof}

Figure \ref{range_fig:dist_aloha} shows the plot of $p(\lambda,r,\beta,\alpha)$ versus $r$ for $\lambda=1$, $\beta=1.0$ and various values of $\alpha$. 

\begin{figure}[!t]
\centering
\psfrag{r}{$r$}
\includegraphics[scale=0.9]{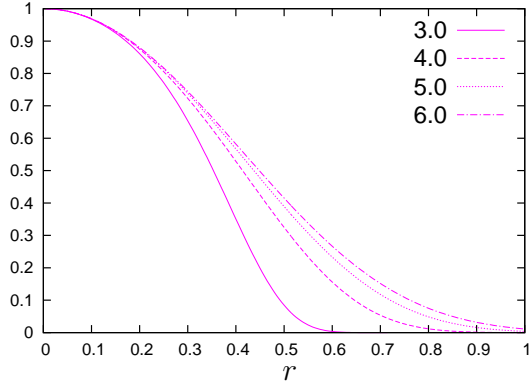}
\caption{$p(1,r,\beta,\alpha)$ versus $r$ of slotted ALOHA. $\lambda=1$, $\beta=1.0$ and $\alpha$ is varied from $3.0$ to $6.0$.
\label{range_fig:dist_aloha}}
\end{figure}

Here, we will use simple algebra to show the homothetic invariance of $r_1$ which has also been proved in the article \cite{adjih04}.

\begin{theorem}
$r_1=\sqrt{\lambda}r_{\lambda}$, where $r_{\lambda}$ is the radius under the given density $\lambda$ of the set of simultaneous transmitters ${\cal S}$ such that 
$$
\Pr\left({\cal W}(\lambda)<\frac{{r_{\lambda}}^{-\alpha}}{\beta}\right)=\int_0^{\frac{{r_{\lambda}}^{-\alpha}}{\beta}}w(x,\lambda)dx=p_0~,
$$
where $p_0$ is a constant under given values of $\alpha$ and $\beta$. 
\end{theorem}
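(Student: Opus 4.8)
The plan is to exploit the scaling structure already encoded in the Laplace transform $\tilde{w}(\theta,\lambda)=\exp(-\lambda C\theta^{\gamma})$ established in the preceding theorem. The key observation is that this form forces a clean stochastic scaling law for ${\cal W}(\lambda)$ in the intensity parameter $\lambda$. Comparing $\tilde{w}(\theta,\lambda)$ with $\tilde{w}(a\theta,1)=\exp(-C a^{\gamma}\theta^{\gamma})$ and matching the coefficients of $\theta^{\gamma}$ gives $a^{\gamma}=\lambda$, i.e. $a=\lambda^{1/\gamma}=\lambda^{\alpha/2}$ since $\gamma=\frac{2}{\alpha}$. Hence ${\cal W}(\lambda)$ has the same distribution as $\lambda^{\alpha/2}{\cal W}(1)$; in words, changing the intensity of the Poisson process merely rescales the interference functional by a deterministic power of $\lambda$.

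First I would substitute this scaling law into the defining equation for $r_{\lambda}$. Writing the defining condition as $\Pr\!\left({\cal W}(\lambda)<\frac{r_{\lambda}^{-\alpha}}{\beta}\right)=p_0$ and using ${\cal W}(\lambda)\overset{d}{=}\lambda^{\alpha/2}{\cal W}(1)$, I obtain $\Pr\!\left({\cal W}(1)<\frac{\lambda^{-\alpha/2}r_{\lambda}^{-\alpha}}{\beta}\right)=p_0$. The crucial algebraic simplification is that $\lambda^{-\alpha/2}r_{\lambda}^{-\alpha}=\bigl(\sqrt{\lambda}\,r_{\lambda}\bigr)^{-\alpha}$, so the condition becomes $\Pr\!\left({\cal W}(1)<\frac{(\sqrt{\lambda}\,r_{\lambda})^{-\alpha}}{\beta}\right)=p_0$. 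Setting $\lambda=1$ in the original defining equation gives $\Pr\!\left({\cal W}(1)<\frac{r_1^{-\alpha}}{\beta}\right)=p_0$ for the reference radius $r_1$, and equating arguments yields $r_1=\sqrt{\lambda}\,r_{\lambda}$.

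There is an even more elementary route that matches the ``simple algebra'' remark in the text: I can read the scaling directly off the series of the preceding theorem. Since $\gamma=\frac{2}{\alpha}$, each term carries the factor $\lambda^{n}\bigl(\frac{r^{-\alpha}}{\beta}\bigr)^{-n\gamma}=(\lambda r^{2})^{n}\beta^{n\gamma}$, so $p(\lambda,r,\beta,\alpha)$ depends on $\lambda$ and $r$ only through the product $\lambda r^{2}$ for fixed $\alpha$ and $\beta$. The defining equation $p(\lambda,r_{\lambda},\beta,\alpha)=p_0$ then pins down the value of $\lambda r_{\lambda}^{2}$, and comparison with the case $p(1,r_1,\beta,\alpha)=p_0$ forces $\lambda r_{\lambda}^{2}=r_1^{2}$, i.e. the same conclusion.

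Either way, the one step that genuinely needs justification is that I may invert the map $x\mapsto\Pr({\cal W}(1)<x)$ (equivalently the map $u\mapsto p$ viewed as a function of $u=\lambda r^{2}$) at the level $p_0$, so that equal probabilities force equal arguments. This is the point I expect to be the only real obstacle, and it requires establishing strict monotonicity of this cumulative distribution function near $p_0$. It is settled by noting that $\exp(-C\theta^{\gamma})$ with $0<\gamma<1$ is the Laplace transform of a one-sided $\gamma$-stable law, which possesses a continuous density on $(0,\infty)$ with strictly positive support; the associated distribution function is therefore continuous and strictly increasing, supplying the injectivity needed to close the argument.
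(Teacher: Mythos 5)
Your proof is correct and takes essentially the same approach as the paper: the paper's change of variable $\theta'=\lambda^{\alpha/2}\theta$ inside the inverse-Laplace contour integral is exactly your distributional scaling ${\cal W}(\lambda)\overset{d}{=}\lambda^{\alpha/2}{\cal W}(1)$ stated probabilistically, and both arguments (as does your series-based variant) reduce to the observation that the success probability depends on $(\lambda,r)$ only through $\sqrt{\lambda}\,r$. Your explicit verification that the one-sided $\gamma$-stable distribution function is strictly increasing supplies the injectivity step ("equal probabilities force equal arguments") that the paper uses implicitly without comment, which is a welcome tightening rather than a different route.
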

\begin{proof}
The signal level at any point in the plane with Poisson distributed transmitters is a random variable and the Laplace transform of its probability density is given by \eqref{local_capacity_eq:laplace}.
Using the reverse Laplace transform we have 
$$
w(x,\lambda)=\frac{1}{2i\pi}\int_{-i\infty}^{+i\infty}\tilde{w}(\theta,\lambda)e^{\theta x}d\theta~.
$$
Inserting the expression from \eqref{local_capacity_eq:laplace} in the above equation and commuting the integral signs because 
$$
\int_0^{\frac{{r_{\lambda}}^{-\alpha}}{\beta}}e^{\theta x}dx=\frac{e^{\theta \frac{{r_{\lambda}}^{-\alpha}}{\beta}}-1}{\theta}~,
$$
yields
$$
\frac{1}{2i\pi}\int_{-i\infty}^{+i\infty}\frac{e^{\theta \frac{{r_{\lambda}}^{-\alpha}}{\beta}}-1}{\theta}\tilde{w}(\theta,\lambda)d\theta=p_0~.
$$
The change of variable $\lambda^{\frac{\alpha}{2}}\theta=\theta'$, makes $\lambda$ disappear from the $\tilde{w}(\theta,\lambda)$ expression and we get
$$
\frac{1}{2i\pi}\int_{-i\infty}^{+i\infty}\frac{e^{\theta' \frac{{r_{\lambda}\sqrt{\lambda}}^{-\alpha}}{\beta}}-1}{\theta'}\tilde{w}(\lambda^{\frac{-\alpha}{2}}\theta',1)d\theta'=p_0~.
$$
Since $\tilde{w}(\lambda^{\frac{-\alpha}{2}}\theta',1)$ is independent of $\lambda$ and $r_{\lambda}$ is multiplied by $\sqrt{\lambda}$, we get that $r_{\lambda}$ is simply proportional to $1/\sqrt{\lambda}$, {\it i.e.}, $r_{\lambda}=r_1/\sqrt{\lambda}$.
\end{proof}

Our objective is to derive the optimum value of $r_1$ which maximizes the function  $rp(\lambda,r,\beta,\alpha)$ for $\lambda=1.0$. 

\subsection{Grid Pattern Based Medium Access Control Schemes}
\label{range_sec:grids}

In this section also, we will use the {\em no fading} channel model only. 

Under {\em no fading} channel model, when activated by a grid pattern based scheme, the reception areas of all transmitters remain the same, in every slot, {\em modulo} a translation and/or rotation and can be defined deterministically. Therefore, by virtue of a grid pattern, we can consider {\em w.l.o.g.} the transmitter $i$ located at origin, {\it i.e.},
$$
{\bf z}_i={\bf z}_0=(x_0,y_0)=(0,0)~.
$$ 
Let us consider any point ${\bf z}$ in the plane such that
$$
r_{\bf z}=\|{\bf z}-{\bf z}_0\|~,
$$
and the probability to receive a signal at point ${\bf z}$, which is at distance $r_{\bf z}$ from transmitter $i$ at origin, with SIR at least equal to $\beta$ is
\begin{align*}
p(\lambda, r_{\bf z}, \beta, \alpha)&=1,\;\;\; \text{if}\; {\bf z}\in \mathcal{A}_i({\cal S},\beta,\alpha)\\
p(\lambda, r_{\bf z}, \beta, \alpha)&=0,\;\;\; \text{otherwise}~.
\end{align*}

Note that the grid pattern may be rotated from slot to slot and the reception areas of all transmitters may also be rotated though, under any given orientation, they are defined deterministically for all slots. We defined the probability of successful reception of transmitter $i$ at a receiver at point ${\bf z}\in \mathcal{A}_i({\cal S},\beta,\alpha)$ as equal to one because we do not average it over all rotation orientations during which the transmitter $i$ is active. The reason is that, in our model of packet relaying scheme, a transmitter will transmit a packet to a receiver which is located within its reception area and the probability of successfully transmitting this packet is one under {\em no fading}. We will describe this model in detail in the following discussion. However, in case of {\em Rayleigh fading}, we cannot define the reception area deterministically and, therefore, we will have to introduce a measure for the probability of successful reception which takes into account the variations in received signal levels. The impact of fading on the optimum transmission range of grid pattern based schemes will be discussed in detail in \S \ref{sec:impact_fading_grids}. 

Therefore, under {\em no fading} channel model, we can define the optimum transmission range of a transmitter, when nodes employ grid pattern based schemes, as equal to its maximum transmission range, {\it i.e.}, 
$$
r_{\lambda}=\{\max\|{\bf z}-{\bf z}_0\| : \gamma_i({\bf z})\geq \beta\sum\limits_{j\neq i,j\in{\cal S}}\gamma_j({\bf z})\}~,
$$ 
which is the maximum distance within the reception area of a transmitter at which it can transmit successfully. Therefore, the minimum average number of retransmissions required to deliver a packet to its final destination at mean distance $\overline{L}$ becomes equal to $\overline{L}/r_{\lambda}$. 

As we discussed earlier, the purpose of our analysis of grid pattern based schemes is to establish bounds on the capacity in wireless networks. In this section also, we are interested in establishing an upper bound on the normalized transmission range and a lower bound on the number of retransmissions required to deliver a packet in multi-hop networks. That is why, we are interested in determining the maximum transmission range of a transmitter with various grid pattern based schemes. As we have introduced the notion of maximum transmission range, we will first briefly discuss the progression of a packet, towards its destination, in multi-hop wireless networks with grid pattern based schemes. Later we will present the numerical method we have used to determine the maximum transmission range in wireless networks with grid pattern based schemes.

\begin{figure}[!t]
\centering
\psfrag{s}{$s$}
\psfrag{a}{$r_1$}
\psfrag{b}{$d_1$}
\psfrag{c}{$r_2$}
\psfrag{d}{$d_2$}
\psfrag{e}{$r_3$}
\psfrag{f}{$d_3$}
\includegraphics[scale=0.8]{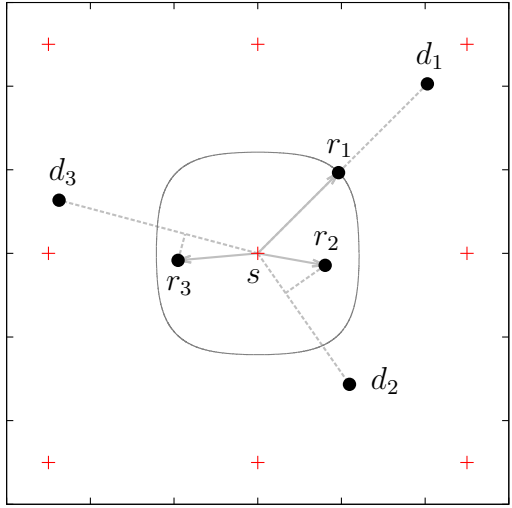}
\caption{An example: node $s$ carries packets for destination nodes $d_1$, $d_2$ and $d_3$ and transmits the packet for the destination $d_1$ towards the relay $r_1$. Nodes in the network employ square grid pattern based scheme.}
\label{range_fig:progress}
\end{figure}

\begin{figure*}[!t]
\centering
\psfrag{s}{$s$}
\psfrag{d}{$d$}
\subfloat[$1^{st}$ hop progress.]
{
	\hspace{-.35cm}
\includegraphics[scale=0.8]{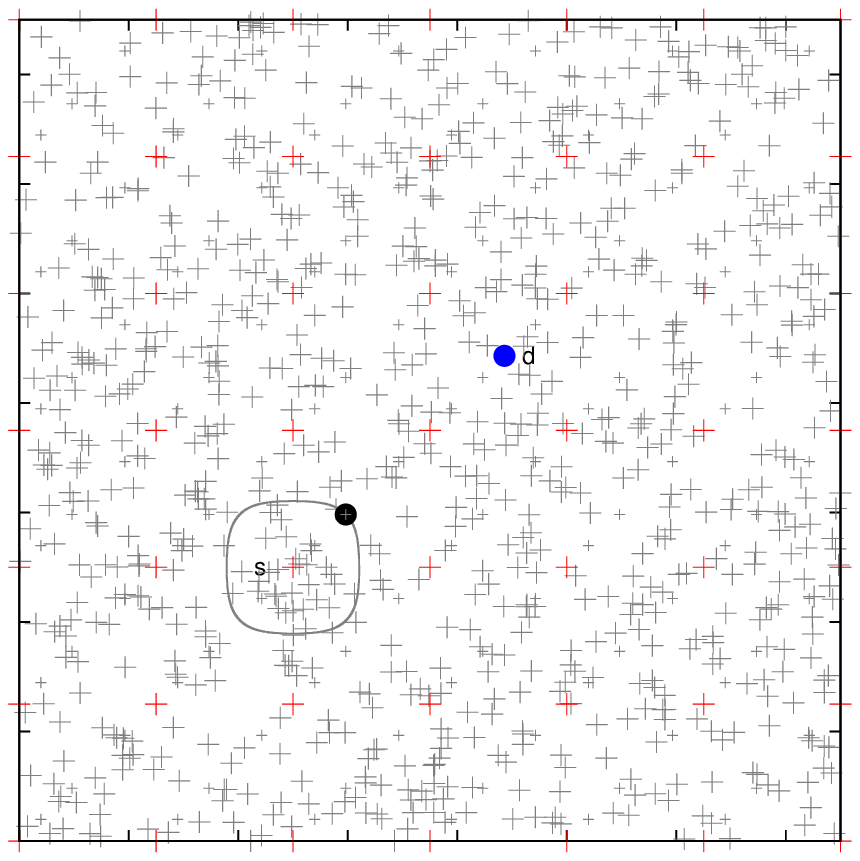}
}
\subfloat[$2^{nd}$ hop progress.]
{
\includegraphics[scale=0.8]{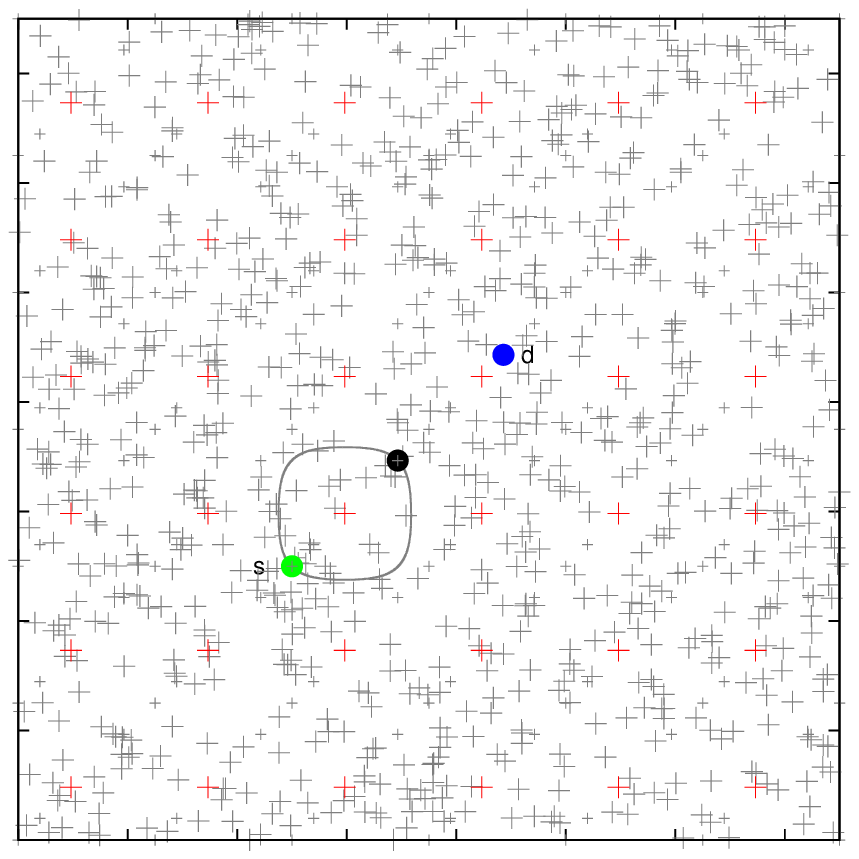}
}

\subfloat[$3^{rd}$ hop progress.]
{
	\hspace{-.35cm}
\includegraphics[scale=0.8]{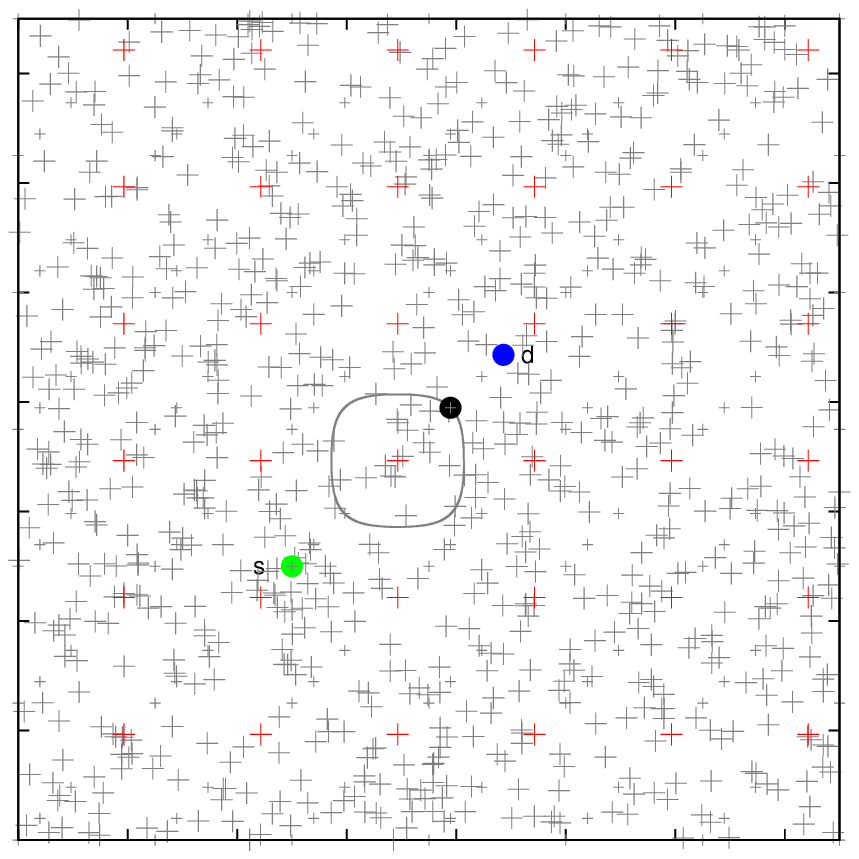}
}
\subfloat[$4^{th}$ hop progress (packet delivered).]
{
\includegraphics[scale=0.8]{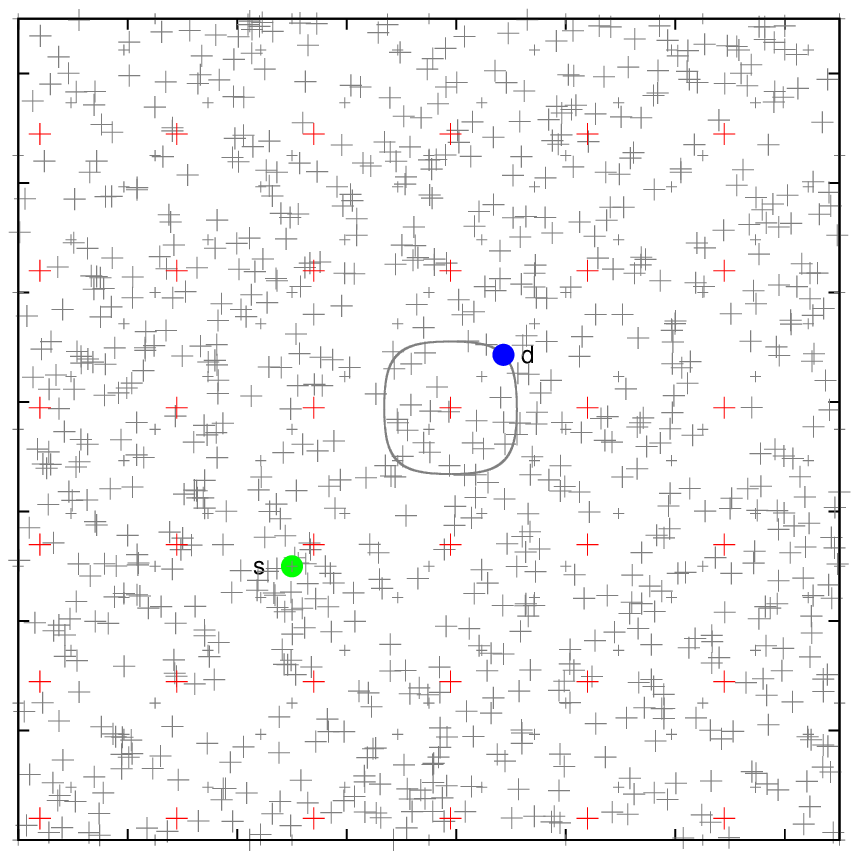}
}
\caption{An example: progress of a packet from the source node $s$ towards the destination node $d$ with square grid pattern based scheme. \label{range_fig:protocol_model}}
\end{figure*}

We define the {\em progress} of a packet as the distance between the transmitter and the receiver relay projected onto the line joining the transmitter and the final destination node of the transmitted packet. A node $j$ is said to be in transmitter $i$'s forward direction if the progress is non-negative when transmitter $i$ transmits packet to the receiver node $j$ successfully. Otherwise, node $j$ is said to be in the backward direction of transmitter $i$.

We assume that a transmitter in the network is also aware of the following parameters:
\begin{itemize}
\item[--] its own cartesian coordinates,
\item[--] the cartesian coordinates of all destination nodes for which it is carrying a packet (this information can be contained in the packet itself) and
\item[--] the cartesian coordinates of all receiver nodes located within its reception area. In order to obtain this information, we may assume that a slot is sub-divided into management and data sub-slots and this information is collected during the management sub-slot and the packet is later transmitted during the data sub-slot. However, this specification is beyond the scope of this article and we only assume that this information is available at all transmitters.
\end{itemize}
Using this information, a transmitter can determine to transmit the packet with the largest forward progress (this condition maybe relaxed for the packets that can be delivered to their destination nodes immediately). This ensures that the number of hops required to reach the destination are minimized. Note that, we assumed that a transmitter, when activated for transmission by the grid pattern based scheme, always has a packet to transmit. Consider the example in Fig. \ref{range_fig:progress} where nodes in the network employ square grid pattern based scheme. This figure shows a set of simultaneous transmitters where the transmitter $s$ is carrying packets for the destination nodes $d_1$, $d_2$ and $d_3$. Note that, the closed curve bounding the node $s$ is the boundary of its reception area. As observed in the figure, the packet for destination $d_1$ has the largest forward progress and, therefore, node $s$ transmits this packet to the relay node $r_1$.

Figure \ref{range_fig:protocol_model} is the figurative representation of the progress of a packet from node $s$ towards node $d$ in a network with randomly distributed nodes. In this case, nodes employ square grid pattern based scheme. Under this scheme, simultaneous transmitters, in each slot, are selected according to the rules specified in \S \ref{schemes_sec:grid_pattern} and they form a regular square grid pattern which is {\em modulo} a translation and/or rotation. In the figure, the set of simultaneous transmitters are represented by the red color and the inactive nodes (potential receivers) are represented by the gray color. The source, destination and successful relay nodes are represented by green, blue and black colors respectively. When the source node $s$ is activated for transmission by the MAC scheme, it transmits the packet to the next relay. Similarly, Figures \ref{range_fig:protocol_model}(b)-(d) show the progress of the packet from the successive relay nodes towards the final destination of the packet. 

The maximum forward progress of a packet is equal to the maximum transmission range achievable with the MAC scheme employed by nodes. As the density of nodes in the network increases and under the assumption that simultaneous transmitters always have packets to be transmitted in the direction of their maximum transmission range, the forward progress of the packets approach the maximum transmission range and packets travel in straight lines from their source nodes towards their destination nodes.  If $\overline{L}$ is the mean distance between source and its destination node, we can see that the minimum number of hops required to reach the destination node is exactly equal to $\frac{\overline{L}}{r_{\lambda}}$, where $r_{\lambda}$ is the maximum transmission range under the given density of the set ${\cal S}$. In this chapter, we will evaluate the maximum transmission range of a transmitter with various grid pattern based schemes. We will cover the grid patterns of square, rectangular, hexagonal and triangle as shown in Fig. \ref{schemes_fig:grid_layouts}. For grid pattern based schemes, we are interested in the maximum transmission range and it is intuitive to assume that the rectangular grid pattern maybe able to achieve higher transmission range as compared to other grid patterns because transmitters along one dimension are far apart as compared to the other dimension. Therefore, under certain conditions, we maybe able to achieve better bounds on the  capacity in multi-hop wireless networks with rectangular grid pattern based scheme. Our discussion in the following sections will help us in identifying those conditions. Note that, the density of the set of simultaneous transmitters $\lambda$ depends on parameter $d$. However, \mbox{$r_1=\sqrt{\lambda}r_{\lambda}$} is independent of the value of $d$ or, for that matter, $\lambda$ as it is invariant for any homothetic transformation of the set of simultaneous transmitters ${\cal S}$. 

\subsubsection{Evaluation of Transmission Range}
\label{range_sec:tx_range_method}

Because of the correlation between the location of simultaneous transmitters in case of grid pattern based MAC schemes, it does not seem possible to develop a tractable analytical model and derive the closed-form expression for the distribution of signal levels and the maximum transmission range. There have been attempts to derive bounds on the interference in a wireless network consisting of nodes placed in a regular pattern. For instance, the article \cite{haenggi09} derived the lower bound in such networks by computing the signal level of interference at the center of the network. In contrast, we will propose a method for accurate demarcation of the reception area of a transmitter under {\em no fading} and we will use this method to compute the maximum transmission range in wireless networks with grid pattern based schemes. 

In grid pattern based schemes, the set of simultaneous transmitters, {\it i.e.} the set ${\cal S}$, is arranged in a grid pattern. We will cover grids of square, rectangular, hexagonal and triangle patterns and, by consequence of the grid pattern, all transmitters have the same maximum transmission range {\em modulo} a translation and/or rotation. We will use the method of Lagrange multipliers, described in Chapter $17$ of the book \cite{lagrange}, to find the maximum transmission range of transmitter $i$ located at the origin while fulfilling the constraint of SIR threshold. 

\begin{figure}[!t]
\centering
\psfrag{a}{${\bf z}_i$}
\psfrag{b}{${\bf z}$}
\psfrag{c}{$d{\bf z}=J\frac{\nabla S_{i}({\bf z})}{|\nabla S_{i}({\bf z})|}\delta t$}
\psfrag{d}{${\cal C}_i({\cal S},\beta,\alpha)$}
\psfrag{e}{$A_i({\cal S},\beta,\alpha)$}
\includegraphics[scale=0.65]{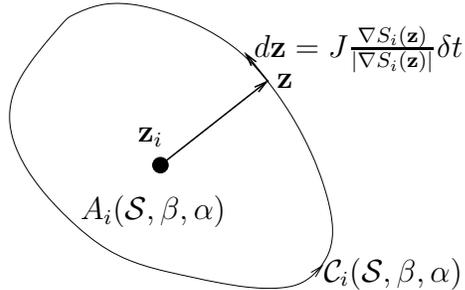}
\caption{Computation of the reception area of transmitter $i$.
\label{local_capacity_fig:snr_gradient}}
\end{figure}

Figure \ref{local_capacity_fig:snr_gradient} is the figurative representation of the reception area $A_i({\cal S},\beta,\alpha)$ of an arbitrary transmitter $i$. 

We can formulate our problem as follows.
\begin{align}
\text{Maximize: } D_i({\bf z})&=\|{\bf z}-{\bf z}_i\|=\|{\bf z}\|, \notag \\
\text{subject to: } S_i({\bf z})&=\beta~,
\end{align}
where $D_i({\bf z})$ represents the Euclidean distance of point ${\bf z}$ from ${\bf z}_i$. The contour line of the function 
$$
D_i({\bf z})=\|{\bf z}\|=\kappa~,
$$ 
where $\kappa$ is a constant, is a circle in the plane with point ${\bf z}$ on its boundary at a given distance $\kappa$ from the origin. Similarly, the contour line of the SIR function 
\begin{equation}
S_i({\bf z})=\frac{\|{\bf z}-{\bf z}_i\|^{-\alpha}}{\sum\limits_{j\neq i,{\bf z}_j\in {\cal S}}\|{\bf z}-{\bf z}_j\|^{-\alpha}}=\frac{\|{\bf z}\|^{-\alpha}}{\sum\limits_{j\neq i,{\bf z}_j\in {\cal S}}\|{\bf z}-{\bf z}_j\|^{-\alpha}}=\beta~,
\label{range_eq:sinr}
\end{equation}
is the closed curve ${\cal C}_i({\cal S},\beta,\alpha)$ that forms the boundary of the reception area ${\cal A}_i({\cal S},,\beta,\alpha)$ of transmitter $i$. Our goal is to locate a point on the curve ${\cal C}_i({\cal S},\beta,\alpha)$ which is furthest from the location of the transmitter $i$ and to determine its Euclidean distance from ${\bf z}_i$. 

The contour lines of the functions $D_i({\bf z})$ and $S_i({\bf z})$ may be distinct but they will intersect or meet each other. In other words, while moving along the contour line of the function \mbox{$S_i({\bf z})=\beta$}, {\it i.e.}, along the curve ${\cal C}_i({\cal S},\beta,\alpha)$, the value of $D_i({\bf z})$ may vary. Only when the contour lines of the functions $D_i({\bf z})$ and $S_i({\bf z})$ meet tangentially, {\it i.e.}, when the contour lines meet but do not cross each other that we do not increase or decrease the value of $D_i({\bf z})$. The contour lines meet at the critical point where the tangent vectors of these contour lines are parallel and that point in the plane is the point of our interest. Note that, as the gradient of a function is perpendicular to the contour lines, the gradients of the functions $D_i({\bf z})$ and $S_i({\bf z})$ are also parallel at our point of interest. 

We define the gradient of the function $D_i({\bf z})$, $\nabla D_i({\bf z})$, as
$$
\nabla D_i({\bf z})=\left[
\begin {array}{c}
\frac{\partial}{\partial x}D_i({\bf z})\\
\noalign{\medskip}
\frac{\partial}{\partial y}D_i({\bf z})
\end {array}
\right]~.
$$

Similarly, the gradient of $S_i({\bf z})$, $\nabla S_i({\bf z})$, at point ${\bf z}$ on the curve ${\cal C}_i({\cal S},\beta,\alpha)$, is
$$
\nabla S_i({\bf z})=\left[
\begin {array}{c}
\frac{\partial}{\partial x}S_i({\bf z})\\
\noalign{\medskip}
\frac{\partial}{\partial y}S_i({\bf z})
\end {array}
\right]~.
$$ 
Deriving closed form expressions for the SIR function $S_i({\bf z})$ and its gradient function when simultaneous transmitters are positioned in a specific grid pattern in the plane does not seem tractable. Therefore, we  will obtain a discretized and numerically convergent representation of ${\cal C}_i({\cal S},\beta,\alpha)$ by finite elements.
The vector $d{\bf z}$, in Fig. \ref{local_capacity_fig:snr_gradient}, is co-linear with $J\frac{\nabla S_{i}({\bf z})}{|\nabla S_{i}({\bf z})|}$ where $J$ is the anti-clockwise rotation of $3\pi/2$ (or clockwise rotation of $\pi/2$) given by
$$
J=\left[\begin{array}{cc}
0 & 1\\
-1 & 0\end{array}\right]~.
$$ 
Therefore, we can fix 
$$
d{\bf z}=J\frac{\nabla S_{i}({\bf z})}{|\nabla S_{i}({\bf z})|}\delta t~,
$$
where $\delta t$ is assumed to be infinitesimally small. The sequence of points ${\bf z}(k)$ computed as 
\begin{align}
{\bf z}(0) & ={\bf z}\notag\\
{\bf z}(k+1) & ={\bf z}(k)+J\frac{\nabla S_{i}({\bf z}(k))}{|\nabla S_{i}({\bf z}(k))|}\delta t~,
\label{local_capacity_eq:parameterized}
\end{align} 
gives a discretized and numerically convergent parametric representation of ${\cal C}_i({\cal S},\beta,\alpha)$ by finite elements. 

We will use this discretized representation of ${\cal C}_i({\cal S},\beta,\alpha)$ and compute the gradients of the functions $D_i({\bf z}(k))$ and $S_i({\bf z}(k))$ at the sequence of points ${\bf z}(k)$ to determine the point where these gradients are parallel. This point will give the maximum transmission range of the transmitter $i$.  

\section{Evaluation and Results}
\label{range_sec:results}

In this section, we will perform detailed analysis of the MAC schemes, we have discussed in this article, under {\em no fading} channel model only. 

\subsection{Slotted ALOHA Scheme}

\begin{figure*}[!t]
\centering
\psfrag{range-m-probability}{}
\subfloat[$\beta$ is varying and $\alpha=4.0$.]
{
	\hspace{-1.25cm}
\includegraphics[scale=.9]{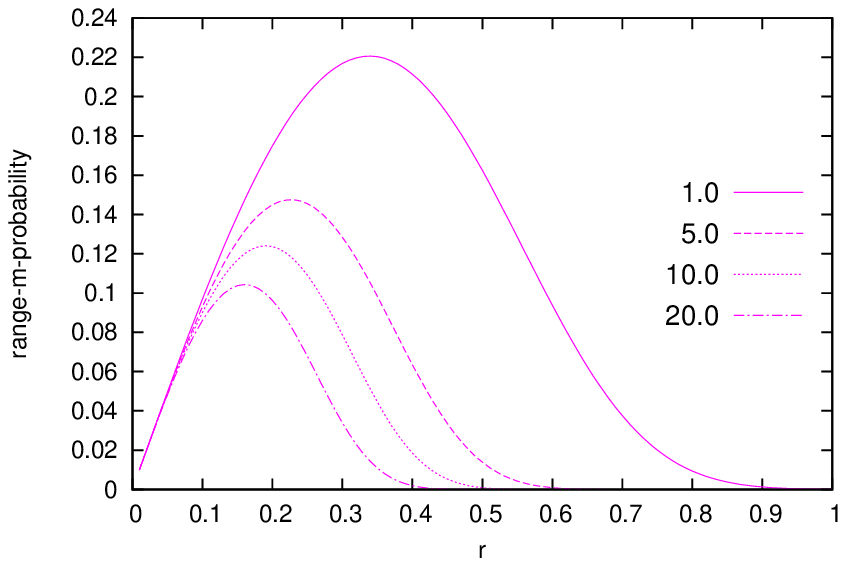}
}
\subfloat[$\beta=10.0$ and $\alpha$ is varying.]
{
	\hspace{-1.25cm}
\includegraphics[scale=.9]{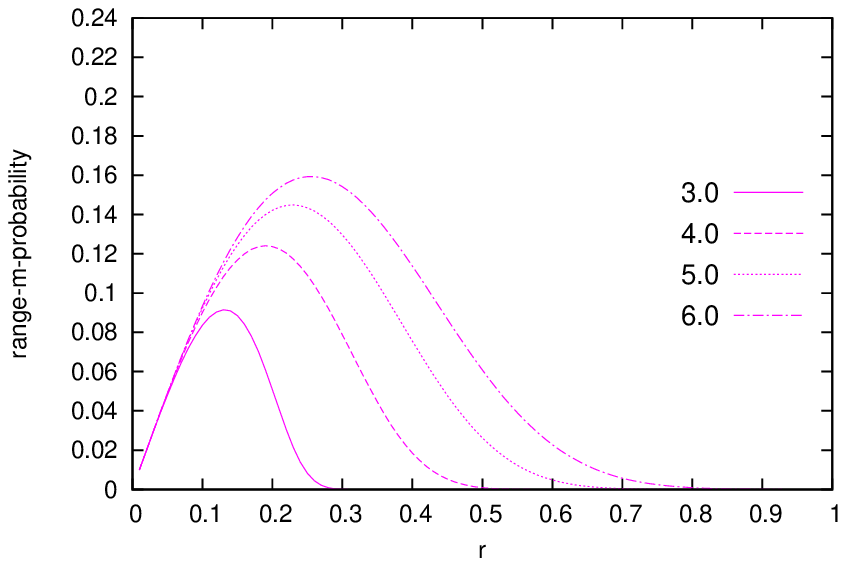}
}
\caption{Optimizing $rp(\lambda, r, \beta, \alpha)$ of slotted ALOHA scheme with $\lambda=1$.
\label{range_fig:op_aloha_}}
\end{figure*}

In case of slotted ALOHA scheme, our goal is to identify the optimal value of $r$ which maximizes the value of $rp(\lambda,r,\beta,\alpha)$, which is computed from \eqref{range_eq:prob_aloha} with varying values of the parameters $\beta$ and $\alpha$ and \mbox{$\lambda=1$}. 

Figure \ref{range_fig:op_aloha_} shows the plots of $rp(\lambda, r, \beta, \alpha)$ versus $r$, $\beta$ and $\alpha$ with \mbox{$\lambda=1$}. In Fig. \ref{range_fig:op_aloha_}(a), $r$ and $\beta$ are varying and $\alpha$ is fixed at $4.0$. Similarly, in Fig. \ref{range_fig:op_aloha_}(b), $r$ and $\alpha$ are varying and $\beta$ is fixed at $10.0$. Note that, as we are only interested in the optimal value of $r$ which maximizes the value of $rp(\lambda,r,\beta,\alpha)$ at given $\beta$ and $\alpha$ and $\lambda=1$, we only refer to this value by $r_1$ in our discussion.

\subsection{Grid Pattern Based Medium Access Control Schemes}
\label{range_sec:grid_pattern}

In case of grid pattern based schemes, we will use the numerical method described in \S \ref{range_sec:tx_range_method} to compute the maximum transmission range.  In order to approach an infinite map, we will perform numerical simulations in a very large network spread over $2D$ square map with length of each side equal to $10000$ meters. The transmitters are spread in this network area in square, rectangular, hexagonal or triangular grid pattern. We set the parameter $d$, for all grid patterns, equal to $25$ meters although it will have no effect on the validity of our conclusions as normalized maximum transmission range $r_1$ is independent of $\lambda$. To keep away border effects, we will only compute the maximum transmission range of the transmitter $i$, located in the center of the network area, {\it i.e.}, \mbox{${\bf z}_i=(x_i,y_i)=(0,0)$}. The network area is large enough so that the reception area of transmitter $i$, and hence its maximum transmission range, is close to its reception area and maximum transmission range in an infinite map. In case of rectangular grid pattern, we vary the values of the factors $k_1$ and $k_2$ in the ratio of $\frac{1}{2}$ and $\frac{1}{4}$. Note that the factors $k_1$ and $k_2$ are associated with the construction of the rectangular grid pattern as shown in Fig. \ref{schemes_fig:grid_layouts}. This allows us to derive the conclusions on the impact of these factors on the maximum transmission range of rectangular grid pattern based schemes. We can determine the value of $\lambda$ using the Voronoi tessellation of the network map and we derive its value for square, rectangular, hexagonal and triangular grid pattern to be equal to $\frac{1}{d^2}$, $\frac{1}{k_1k_2d^2}$, $\frac{4}{3\sqrt{3}d^2}$ and $\frac{2}{\sqrt{3}d^2}$ respectively. 

\subsection{Summary of Results}
\label{range_sec:summary_results}

\begin{figure*}[!t]
\centering
\subfloat[$\beta$ is varying and $\alpha=4.0$.]
{
	\hspace{-1.25cm}
\includegraphics[scale=.9]{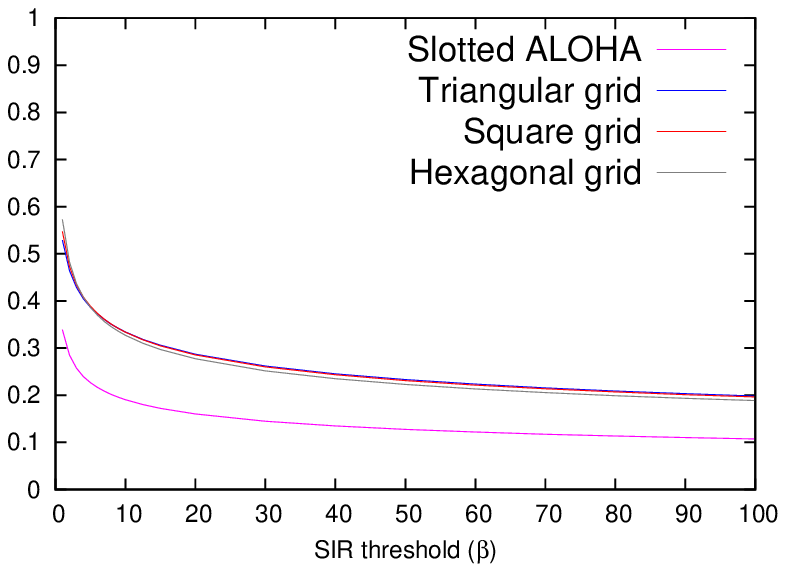}
}
\subfloat[$\beta$ is varying and $\alpha=4.0$. Rectangular grid with given $(k_1:k_2)$ ratio.]
{
	\hspace{-1.25cm}
\includegraphics[scale=.9]{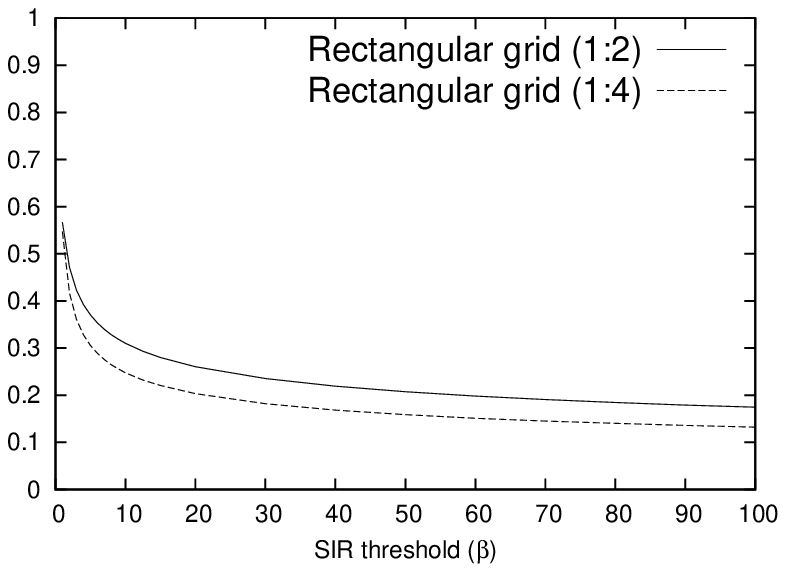}
}

\subfloat[$\beta=10.0$ and $\alpha$ is varying.]
{
	\hspace{-1.25cm}
\includegraphics[scale=.9]{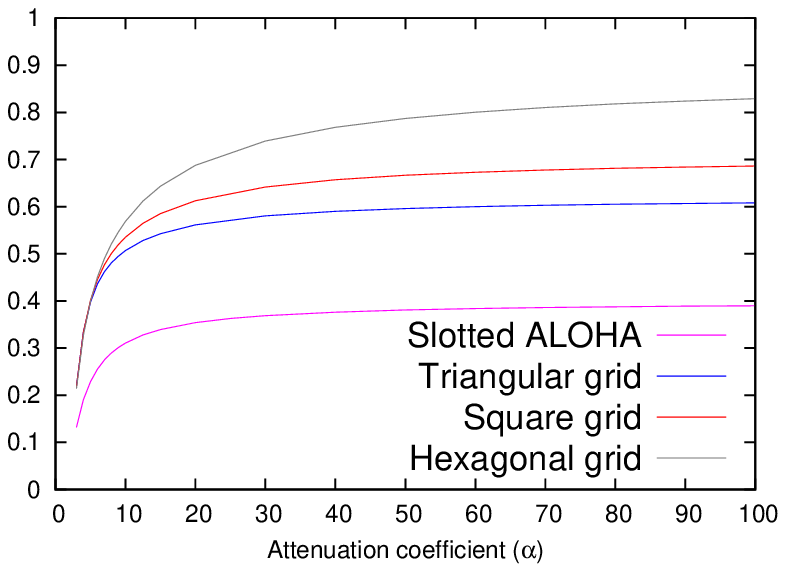}
}
\subfloat[$\beta=10.0$ and $\alpha$ is varying. Rectangular grid with given $(k_1:k_2)$ ratio.]
{
	\hspace{-1.25cm}
\includegraphics[scale=.9]{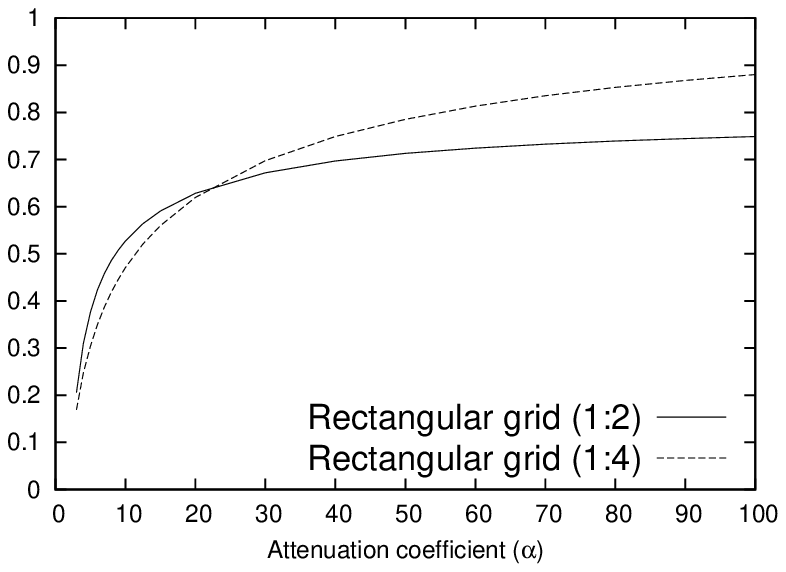}
}
\caption{Optimal $r_1=\sqrt{\lambda}r_{\lambda}$ of slotted ALOHA and grid pattern based schemes with $\lambda=1$.
\label{range_fig:range_grids_}}
\end{figure*}

\begin{figure*}[!t]
\centering
\subfloat[$\beta$ is varying and $\alpha=4.0$.]
{
	\hspace{-1.25cm}
\includegraphics[scale=.9]{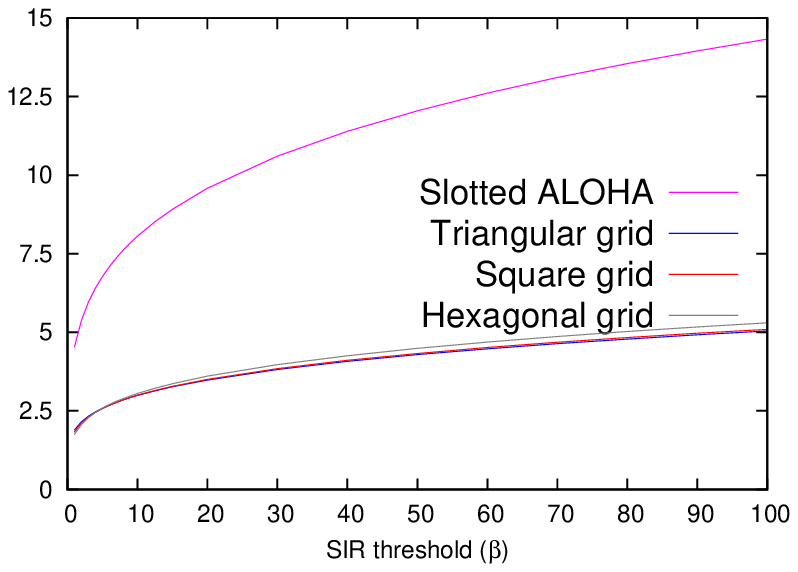}
}
\subfloat[$\beta$ is varying and $\alpha=4.0$. Rectangular grid with given $(k_1:k_2)$ ratio.]
{
	\hspace{-1.25cm}
\includegraphics[scale=.9]{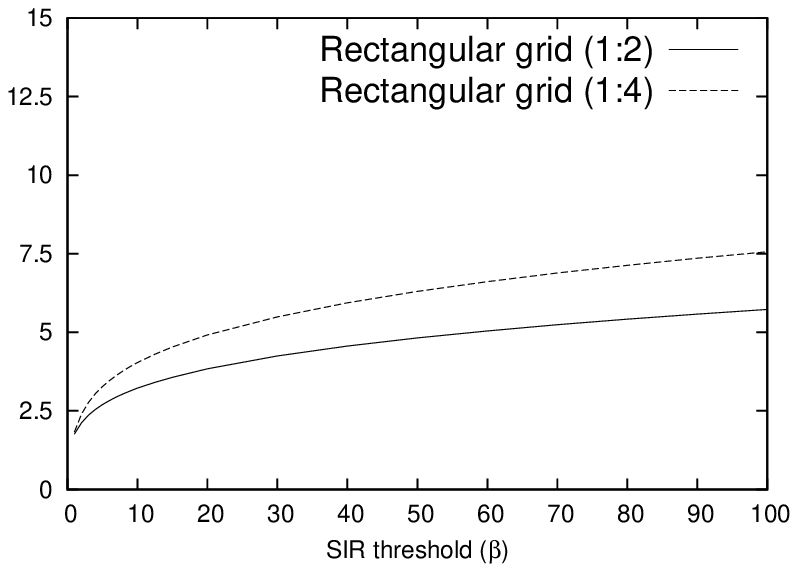}
}

\subfloat[$\beta=10.0$ and $\alpha$ is varying.]
{
	\hspace{-1.25cm}
\includegraphics[scale=.9]{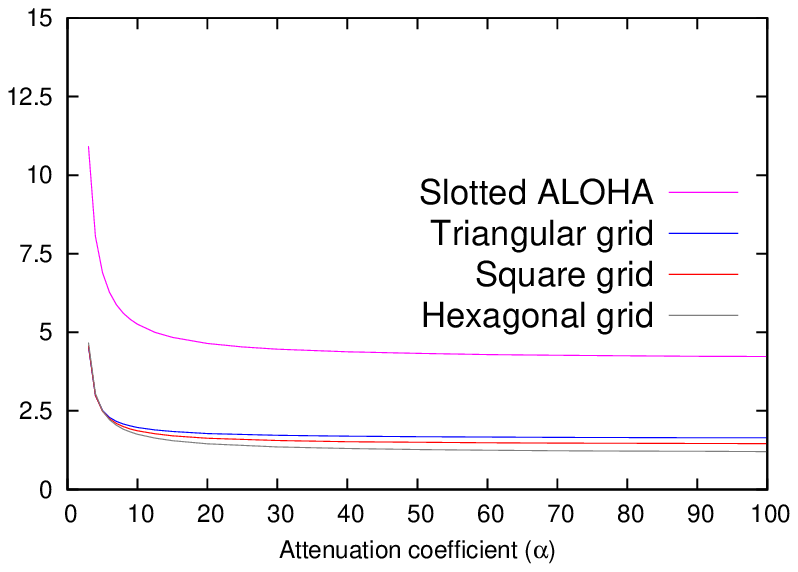}
}
\subfloat[$\beta=10.0$ and $\alpha$ is varying. Rectangular grid with given $(k_1:k_2)$ ratio.]
{
	\hspace{-1.25cm}
\includegraphics[scale=.9]{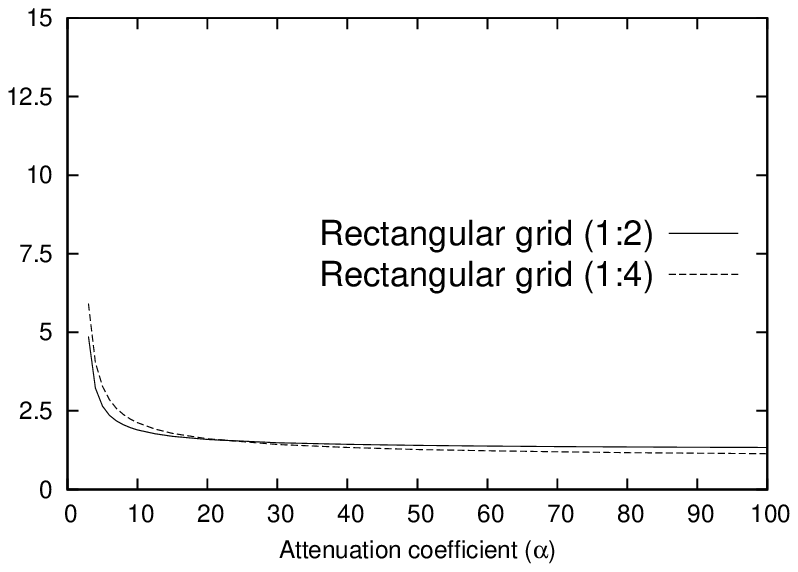}
}
\caption{Optimal (minimized) $\frac{1}{rp(\lambda, r, \beta, \alpha)}$ of slotted ALOHA and grid pattern based schemes with $\lambda=1$.
\label{range_fig:op_rp_grids_}}
\end{figure*}

\begin{figure*}[!t]
\centering
\subfloat[$\beta$ is varying and $\alpha=4.0$.]
{
	\hspace{-1.25cm}
\includegraphics[scale=.9]{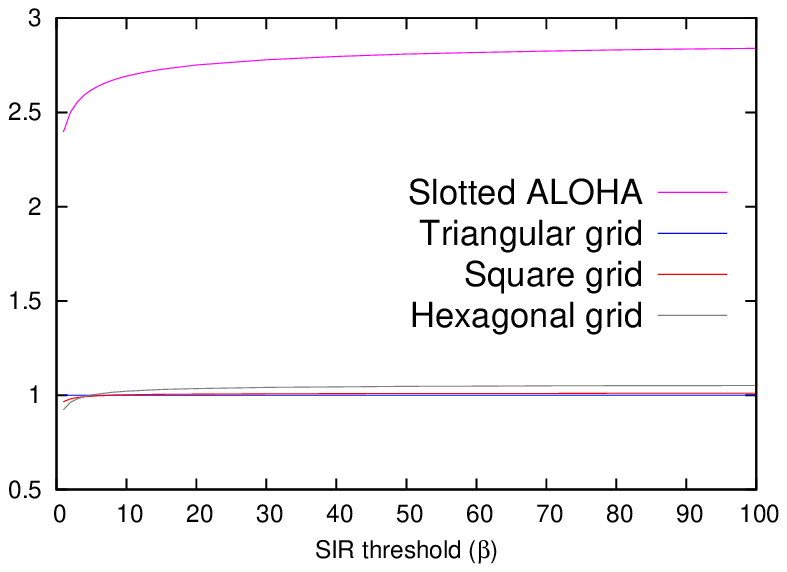}
}
\subfloat[$\beta$ is varying and $\alpha=4.0$. Rectangular grid with given $(k_1:k_2)$ ratio.]
{
	\hspace{-1.25cm}
\includegraphics[scale=.9]{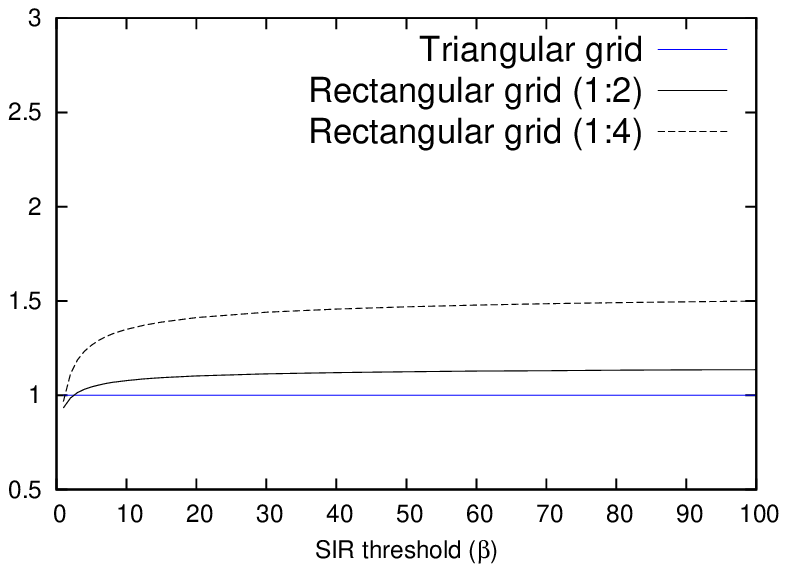}
}

\subfloat[$\beta=10.0$ and $\alpha$ is varying.]
{
	\hspace{-1.25cm}
\includegraphics[scale=.9]{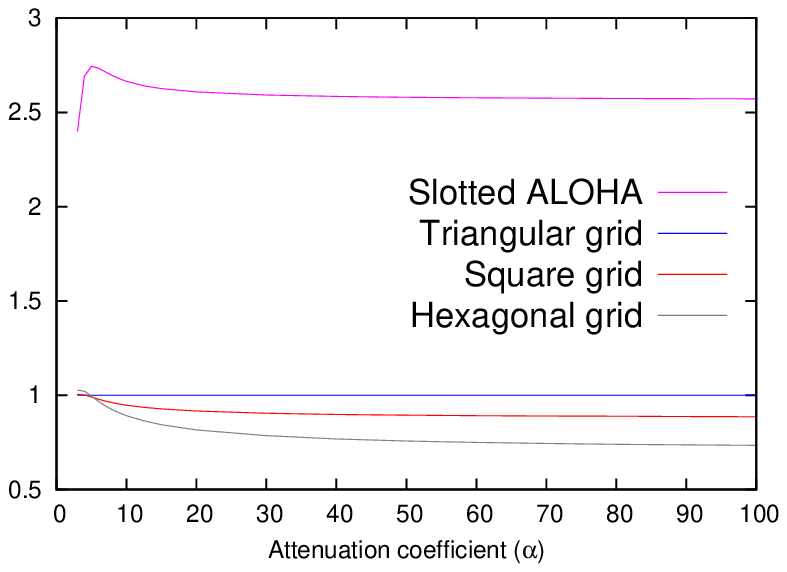}
}
\subfloat[$\beta=10.0$ and $\alpha$ is varying. Rectangular grid with given $(k_1:k_2)$ ratio.]
{
	\hspace{-1.25cm}
\includegraphics[scale=.9]{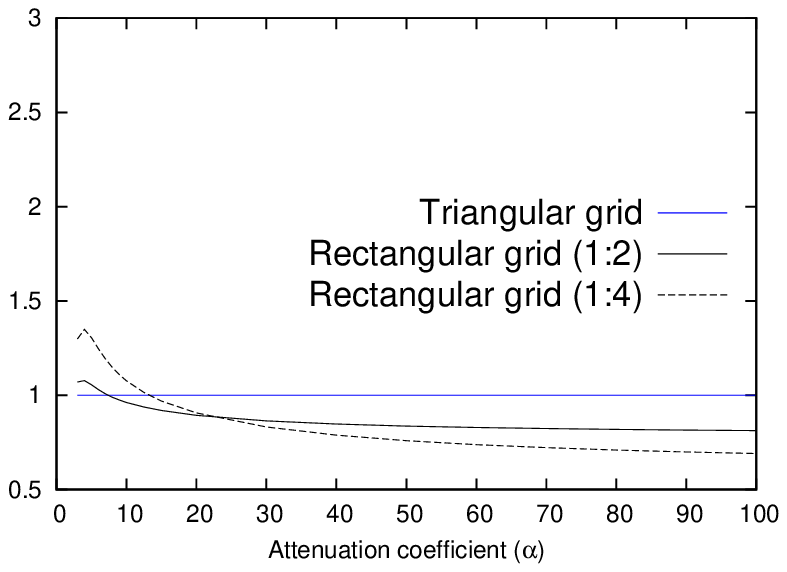}
}
\caption{Optimal (minimized) $\frac{1}{rp(\lambda, r, \beta, \alpha)}$ of slotted ALOHA and grid pattern based schemes normalized {\em w.r.t.} triangular grid pattern based scheme with $\lambda=1$. 
\label{range_fig:op_rp_compare_}}
\end{figure*}

Figure \ref{range_fig:range_grids_} shows the plots of the normalized optimal transmission range $r_1$ with varying $\beta$ and $\alpha$ for all MAC schemes. In Fig. \ref{range_fig:range_grids_}(a) and \ref{range_fig:range_grids_}(b), $\beta$ is varying and $\alpha$ is fixed at $4.0$. Similarly, in Fig. \ref{range_fig:range_grids_}(c) and \ref{range_fig:range_grids_}(d), $\beta$ is fixed at $10.0$ and $\alpha$ is varying. These results show that under typical values of $\beta$, in the range of $1$ to $20$, and $\alpha$, in the range of $3$ to $6$, the normalized maximum transmission range of all grid pattern based schemes are almost similar and {\em at most} double the normalized optimal average transmission range of slotted ALOHA scheme. However, as the values of $\beta$ and $\alpha$ approach their extremities, certain grid pattern based schemes seem to outperform the others. 

For example, as $\beta$ approaches zero, the maximum transmission range is influenced most by the closest source of interference and the most optimal scheme which maximizes the transmission range, in this case, is based on rectangular grid pattern with \mbox{$k_2\gg k_1$}. However, as $\beta$ increases and approaches the value of $100$, influence from more distant transmitters increase and the shape of the reception area approaches the shape of a small circular disk with transmitter at the center. Therefore, in this case, the most optimal scheme to maximize the transmission range is based on triangular grid pattern. 

On the other hand, as $\alpha$ increases, the reception area is influenced most by the interference from the nearest interferer as compared to any other interferer and the area of correct reception tends to be the Voronoi cell. Therefore, in this case, the most optimal scheme should also be based on rectangular grid pattern with \mbox{$k_2\gg k_1$}. We will discuss the asymptotic behavior of normalized maximum transmission ranges of grid pattern based schemes in detail in \S \ref{sec:asymp}. 

Figure \ref{range_fig:op_rp_grids_} shows the plots of the optimized (minimized) number of retransmissions $\frac{1}{rp(\lambda,r,\beta,\alpha)}$ required to transport a packet through multiple relays from its source node to its destination node located at unit distance with \mbox{$\lambda=1$} and varying $\beta$ and $\alpha$. In Fig. \ref{range_fig:op_rp_grids_}(a) and \ref{range_fig:op_rp_grids_}(b), $\beta$ is varying and $\alpha$ is fixed at $4.0$ whereas in Fig. \ref{range_fig:op_rp_grids_}(c) and \ref{range_fig:op_rp_grids_}(d), $\beta$ is fixed at $10.0$ and $\alpha$ is varying. 

Note that, in case of grid pattern based schemes, $p(\lambda,r,\beta,\alpha)$ is equal to one as, under {\em no fading} channel model, the reception area of a transmitter remains the same {\em modulo} a rotation. Therefore, in case of grid pattern based schemes, \mbox{$\frac{1}{rp(\lambda,r,\beta,\alpha)}=\frac{1}{r}$}. In order to quantify the improvement in the quantity $\frac{1}{rp(\lambda,r,\beta,\alpha)}$ by grid pattern based schemes over slotted ALOHA scheme, we perform a scaled comparison of slotted ALOHA and all grid pattern based schemes which is obtained by dividing the optimized quantity $\frac{1}{rp(\lambda,r,\beta,\alpha)}$ of all these schemes with the minimized quantity $\frac{1}{rp(\lambda,r,\beta,\alpha)}$ of triangular grid pattern based scheme. The reason of choosing triangular grid pattern based scheme as the reference here is twofold. First is that, under typical values of $\beta$ and $\alpha$, it can achieve the most optimal normalized transmission range in the wireless network although it is almost similar to other grid pattern based schemes. Second reason is that, triangular grid pattern based scheme can be visualized as the most optimal node coloring scheme which ensures that neighboring transmitters are located at exactly the distance $d$ from each other. Figure \ref{range_fig:op_rp_compare_} shows the scaled comparison with $\beta$ and $\alpha$ varying. It can be observed that the grid pattern based schemes can reduce the number of transmissions required to deliver a packet to its destination to {\em at most} one-third of the optimal average number of transmissions required with slotted ALOHA scheme. In other words, capacity in wireless networks with grid pattern based schemes can {\em at most} be three times the network capacity with slotted ALOHA scheme. 

\section{Asymptotic Analysis of Grid Pattern Based Medium Access Control Schemes}
\label{sec:asymp}

In this section, we will analyze the asymptotic behavior of the normalized maximum transmission range of grid pattern based schemes as $\beta$ and $\alpha$ approach extreme values.

The unexpected result is that when $\beta$ becomes very small and approaches zero or $\alpha$ becomes very large and approaches infinity, the most optimal grid pattern to maximize the normalized transmission range is the {\em linear} pattern, {\it i.e.}, when the transmitters are equidistantly spaced on a line.

\subsection{With Respect to SIR Threshold}

In this section, we will again assume that the transmitter $i$ is located at origin and our aim is to find its maximum transmission range as $\beta$ approaches infinity.

From \eqref{range_eq:sinr}, we can write
$$
\frac{\|{\bf z}-{\bf z}_i\|^{-\alpha}}{\sum\limits_{j\neq i,{\bf z}_j\in {\cal S}}\|{\bf z}-{\bf z}_j\|^{-\alpha}}=\frac{\|{\bf z}\|^{-\alpha}}{\sum\limits_{j\neq i,{\bf z}_j\in {\cal S}}\|{\bf z}-{\bf z}_j\|^{-\alpha}}=\frac{r_{\bf z}^{-\alpha}}{I(x,y)}=\beta~,
$$
where $r_{\bf z}$ is the distance of point ${\bf z}$ from the location of the transmitter $i$ and $I(x,y)$ is the signal level of the interference at point ${\bf z}$.
As $\beta$ approaches infinity, the reception area of a transmitter shrinks to a circular disk of an infinitesimally small radius with the transmitter at its center, {\it i.e.}, the transmission radius of transmitter $i$ also shrinks to an infinitesimally small value. 

Therefore, we can write the above equation as
$$
\frac{r_{0}(0,0)^{-\alpha}}{\beta}=I(0,0)~,
$$
where $r_{0}(0,0)$ is the transmission range of transmitter $i$ at origin when $\beta$ approaches infinity and $I(0,0)$ is the received signal level, at the location of transmitter $i$, from all transmitters except $i$, {\it i.e.},
$$
I(0,0)=\sum_{j\neq i,{\bf z}_j\in {\cal S}}\|{\bf z}_i-{\bf z}_j\|^{-\alpha}=\sum_{j\neq i,{\bf z}_j\in {\cal S}}\|{\bf z}_j\|^{-\alpha}~.
$$

Our goal is to determine the {\em normalized} maximum transmission range $r'_1$ which is defined as
$$
r'_1=\beta^{\frac{1}{\alpha}}r_0(0,0)=\left(I(0,0)\right)^{-\frac{1}{\alpha}}~.
$$
Note that, in the above expression, we also normalize the transmission range with respect to ({\em w.r.t.}) $\beta$ as, when $\beta$ approaches infinity, the transmission range $r_0(0,0)$ will also approach zero. Therefore, we will compute $\beta^{\frac{1}{\alpha}}r_0(0,0)$ for various grid pattern based schemes in the same network setting, as described in \S \ref{range_sec:grid_pattern}, and identity the scheme which maximizes the normalized maximum transmission range in the wireless network under the asymptotic condition when $\beta$ approaches infinity. Table \ref{range_tbl:eigen_values} shows the values of the normalized maximum transmission range of various grid pattern based schemes as $\beta$ approaches infinity with \mbox{$\alpha=4.0$} and \mbox{$\lambda=1.0$}. It is evident that as $\beta$ approaches infinity, the most optimal maximum transmission range is obtained with triangular grid pattern based scheme. 

\begin{table}
\begin{center}
\begin{tabular}{|c|c|c|}
\cline{3-3}
\multicolumn{2}{c|}{} & \\
\multicolumn{2}{c|}{} & {\em $r'_1=\beta^{\frac{1}{\alpha}}r_0(0,0)$} \\
\multicolumn{2}{c|}{} & \\
   \hline

\multicolumn{2}{|c|}{} & \\
\multicolumn{2}{|c|}{{\em Square}}        					& $0.638232$ \\
\multicolumn{2}{|c|}{} & \\
    \hline
  & & \\
\multirow{5}{*}{\em Rectangular}  & $\frac{k_1}{k_2}=\frac{1}{2}$  	& $0.554905$ \\
  & & \\
    \cline{2-3}
  & & \\
                                                    & $\frac{k_1}{k_2}=\frac{1}{4}$  	& $0.409452$ \\
  & & \\
       \hline
\multicolumn{2}{|c|}{} & \\
\multicolumn{2}{|c|}{{\em Hexagonal}}        					& $0.609856$ \\
\multicolumn{2}{|c|}{} & \\
       \hline
\multicolumn{2}{|c|}{} & \\
\multicolumn{2}{|c|}{{\em Triangular}}        					& $0.644845$ \\
\multicolumn{2}{|c|}{} & \\
       \hline
\end{tabular}
\end{center}
\caption{Normalized maximum transmission range $r'_1$ of various grid pattern based schemes as $\beta$ approaches infinity. $\alpha=4.0$ and $\lambda=1.0$.
\label{range_tbl:eigen_values}}
\end{table}

\begin{figure*}[!t]
\centering
\subfloat[$\beta=10^{-2}$.]
{
	\includegraphics[scale=1.1]{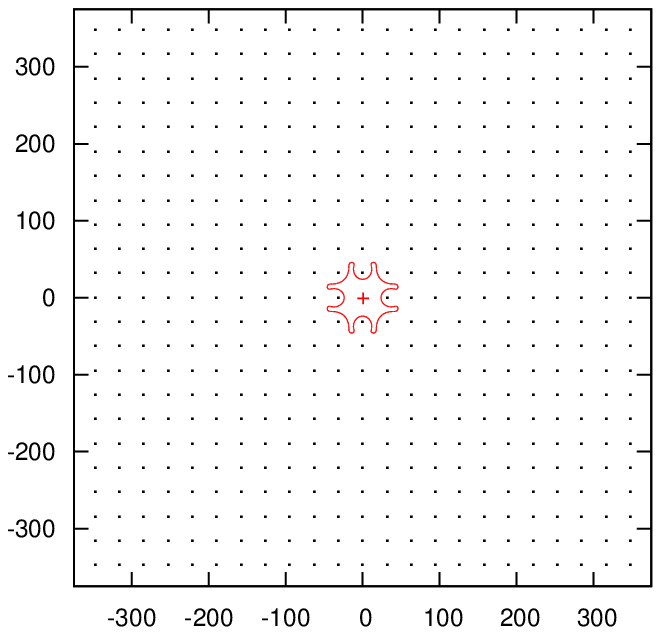}
}
\subfloat[$\beta=10^{-3}$.]
{
	\includegraphics[scale=1.1]{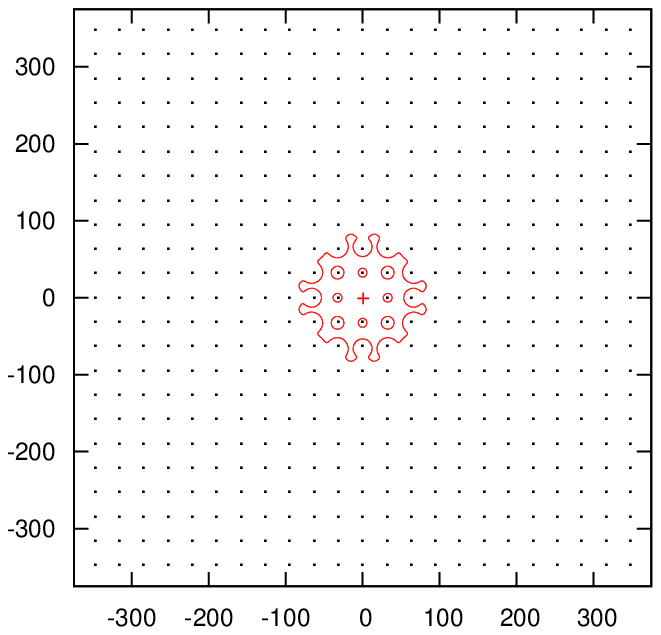}
}

\subfloat[$\beta=10^{-4}$.]
{
	\includegraphics[scale=1.1]{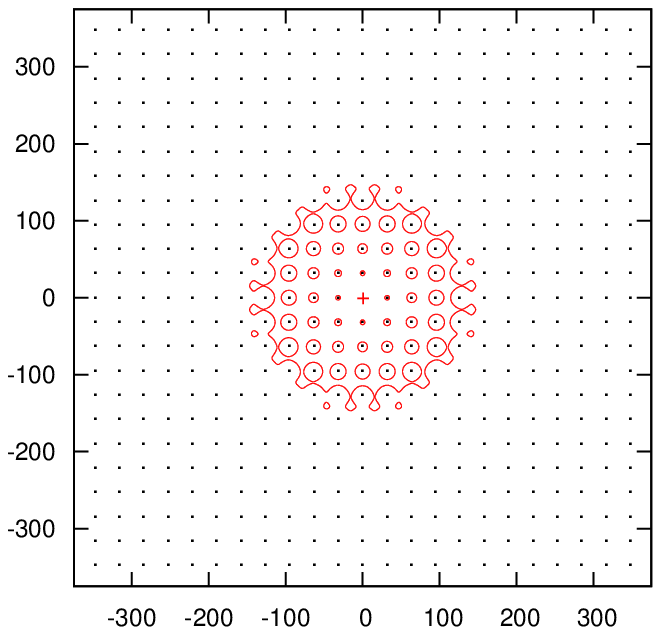}
}
\subfloat[$\beta=10^{-5}$.]
{
	\includegraphics[scale=1.1]{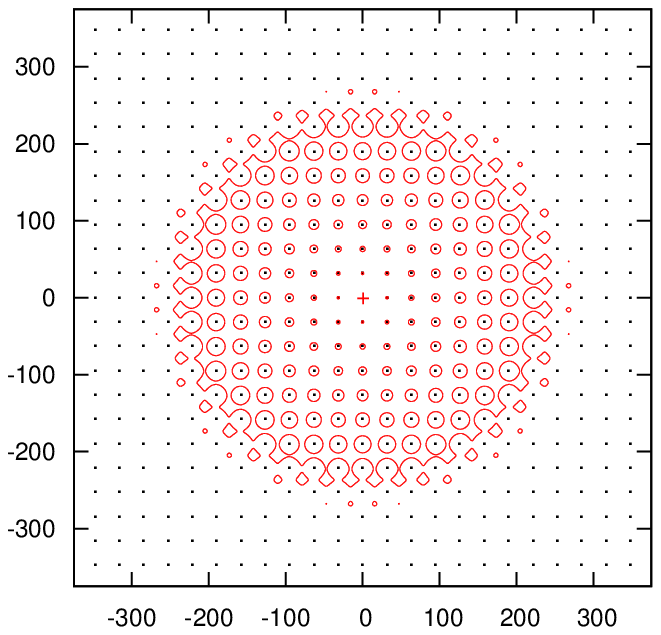}
}
\caption{Reception area of a transmitter with square grid pattern based scheme as $\beta$ approaches an infinitesimally small value. $\alpha=4.0$, $d=\sqrt{1000}$ and \mbox{$\lambda=\frac{1}{d^2}=\frac{1}{1000}$}.
\label{range_fig:sinr_less_1}}
\end{figure*}

\begin{figure*}[!t]
\centering
\subfloat[$\frac{k_1}{k_2}=\frac{10}{100}$.]
{
	\includegraphics[scale=1.1]{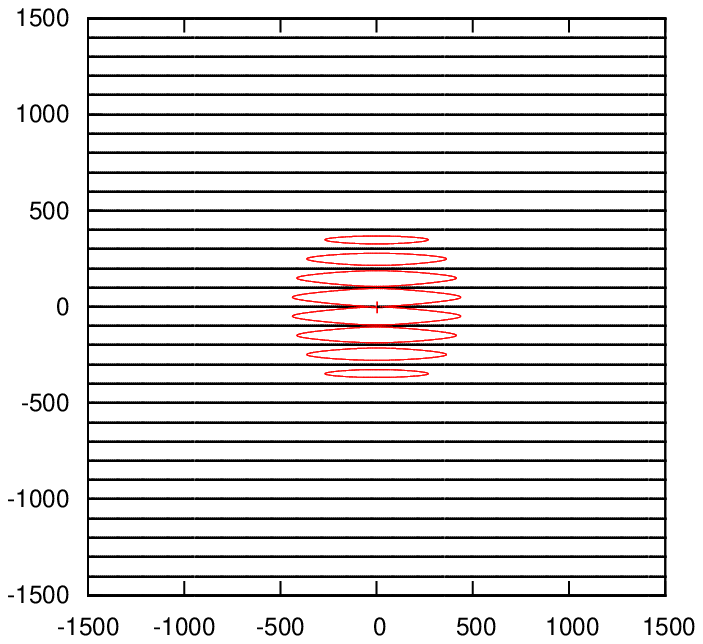}
}
\subfloat[$\frac{k_1}{k_2}=\frac{5}{200}$.]
{
	\includegraphics[scale=1.1]{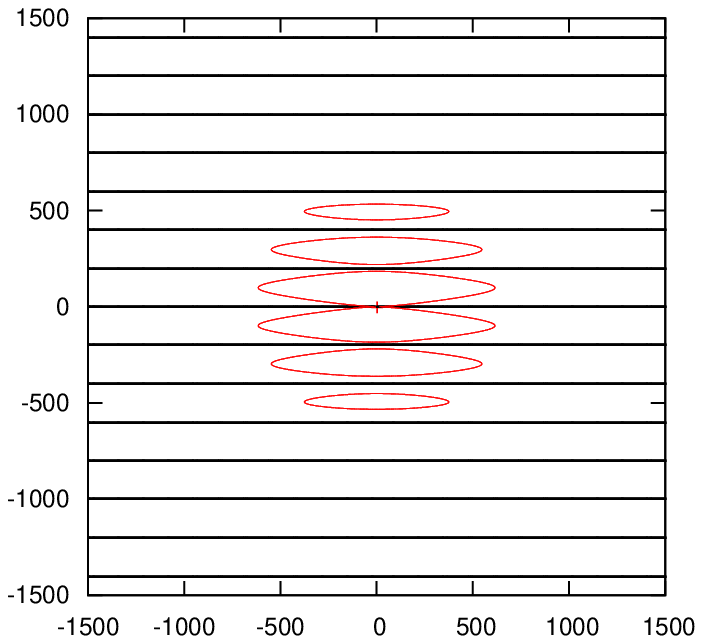}
}

\subfloat[$\frac{k_1}{k_2}=\frac{2.5}{400}$.]
{
	\includegraphics[scale=1.1]{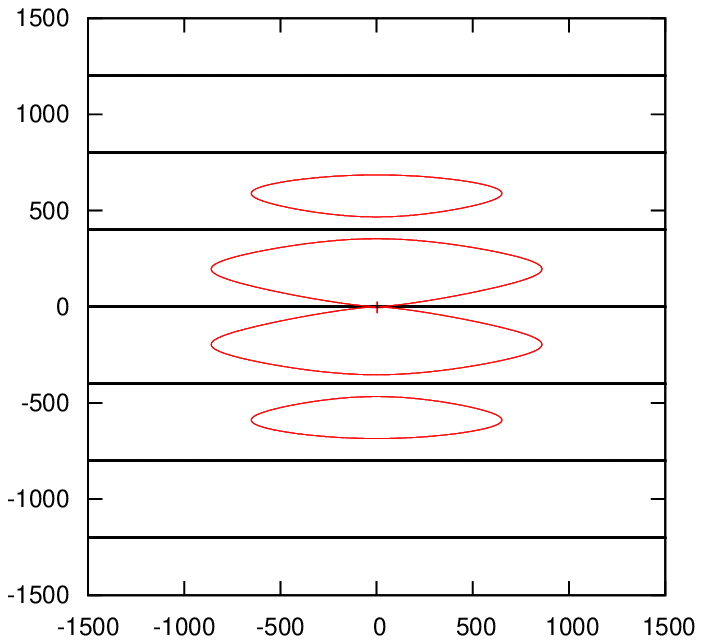}
}
\subfloat[$\frac{k_1}{k_2}=\frac{1}{1000}$.]
{
	\includegraphics[scale=1.1]{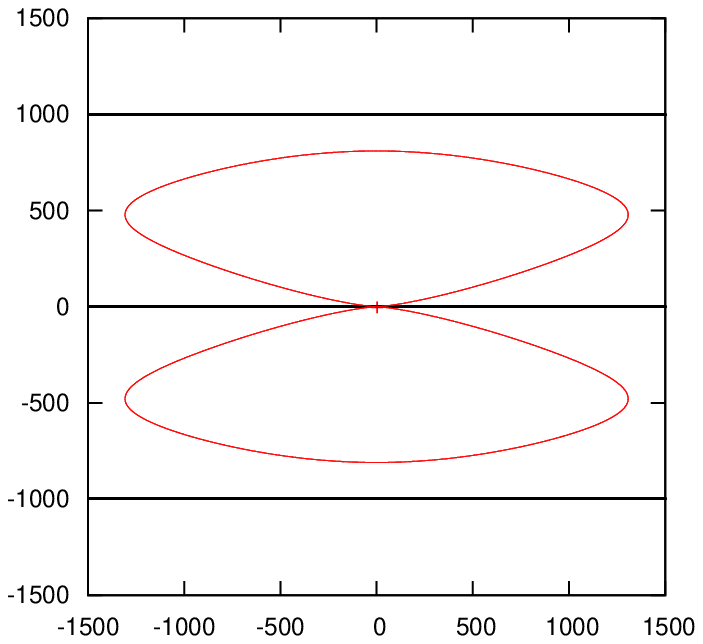}
}
\caption{Reception area of a transmitter with rectangular grid pattern based scheme as $\frac{k_1}{k_2}$ approaches an infinitesimally small value. $\beta=10^{-5}$, $\alpha=4.0$, $d=1$ and \mbox{$\lambda=\frac{1}{k_1k_2d^2}=\frac{1}{1000}$} is kept {\em constant}.
\label{range_fig:sinr_zero}}
\end{figure*}

It is also interesting to see what happens when $\beta$ approaches zero. 

We know that for $\beta$ equal or greater than one, the reception areas of the transmitters do not overlap as there is only one transmitter that can be received successfully at any location in the plane. Figure \ref{local_capacity_fig:reception_area} shows such an example where the reception areas of randomly distributed transmitters are shown for $\beta$ greater or equal to one. However, as the value of $\beta$ decreases and approaches an infinitesimally small value, the reception areas of the transmitters start to overlap. 

In Fig. \ref{range_fig:sinr_less_1}, nodes in the network use square grid pattern based scheme and this figure shows the transformation of the reception area of a transmitter (marked in red) in the same network setting, as described in \S \ref{range_sec:grid_pattern}, as the value of $\beta$ approaches an infinitesimally small value. Note that, as $\beta$ approaches an infinitesimally small value, reception area of the transmitter becomes {\em non-contiguous} and also excludes the locations of the interferers where the signal level of interference is infinite. We can see that as $\beta$ approaches zero, the reception area tends to an infinite plane though excluding the points of the locations of the interferers. 

Our results in Fig. \ref{range_fig:range_grids_}(b) and \ref{range_fig:op_rp_grids_}(b) also hint that as $\beta$ decreases, the most optimal transmission scheme to maximize the transmission range is based on rectangular grid pattern with $k_2>k_1$. Figure \ref{range_fig:sinr_zero} shows the transformation of the reception area of a transmitter when nodes, in the network setting of \S \ref{range_sec:grid_pattern}, use rectangular grid pattern based scheme and the value of ${k_1}/{k_2}$ approaches an infinitesimally small value. Note that because of a very small $k_1$, the transmitters in this figure appear to be densely packed on lines. It can be noticed that as the value of ${k_1}/{k_2}$ decreases, the reception area of the transmitter becomes {\em contiguous} and its maximum transmission range also increases. For example, at $\beta=10^{-5}$, the results in Fig. \ref{range_fig:sinr_less_1}(d) and \ref{range_fig:sinr_zero}(d) show that, as compared to square grid pattern, rectangular grid pattern based scheme can achieve longer transmission range. Note that, in Fig. \ref{range_fig:sinr_zero}, values of the factors $k_1$ and $k_2$ are varied such that $\lambda$ remains constant and we observed their impact on the maximum transmission range. From our results, we conjecture that as $\beta$ approaches zero, the most optimal MAC scheme to maximize the normalized transmission range in multi-hop wireless networks, such that the reception area of a transmitter also remains connected, is based on rectangular grid pattern with $k_2$ approaching infinity or, in other words, {\em linear} pattern with transmitters equidistantly placed on a straight line and, as the density of nodes in the network approach infinity, $k_1$ ({\it i.e.} the distance in-between simultaneous transmitters on the line) approaching an infinitesimally small value. 

\subsection{With Respect to Attenuation Coefficient}

As $\alpha$ approaches infinity, the reception area of a transmitter tends to be the Voronoi cell around this transmitter. This is true for all values of $\beta$. This is because the reception area is influenced most by the interference from the nearest transmitter as compared to any other transmitter. The average area of the Voronoi cell is equal to the inverse of the density of the set of simultaneous transmitters, {\it i.e.}, $1/\lambda$. Figure \ref{range_fig:asym_alpha} shows the reception area of a transmitter in the network setting of \S \ref{range_sec:grid_pattern} where nodes employ square, rectangular, hexagonal and triangular grid pattern based schemes with \mbox{$\beta=1.0$} and \mbox{$\alpha=100.0$}. In this figure, the arrows represent the directions of the maximum transmission range of a transmitter with each scheme. We have used the numerical method described in \S \ref{range_sec:tx_range_method} to plot these reception areas and we can see that at very high value of $\alpha$, the reception areas tend to be the Voronoi cell of a specific shape which depends on the positioning of simultaneous transmitters by the grid pattern based schemes. 

\begin{figure*}[!t]
\centering
\psfrag{r}{}
\subfloat[Square grid.]
{
	\includegraphics[scale=.8]{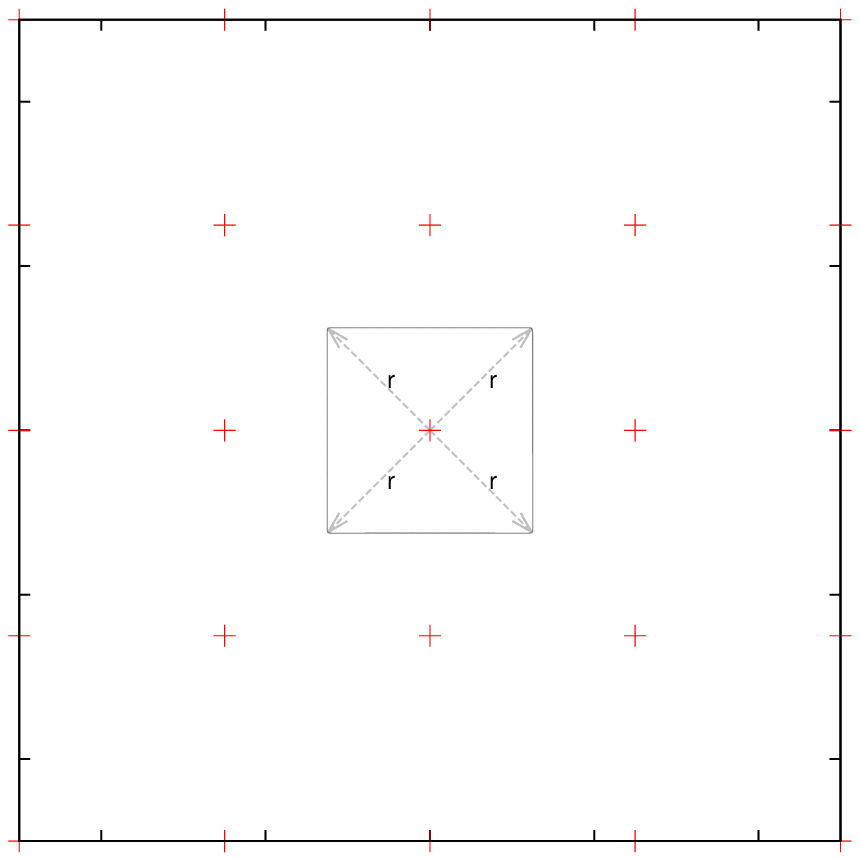}
}
\subfloat[Rectangular grid with $\frac{k_1}{k_2}=0.5$.]
{
	\includegraphics[scale=.8]{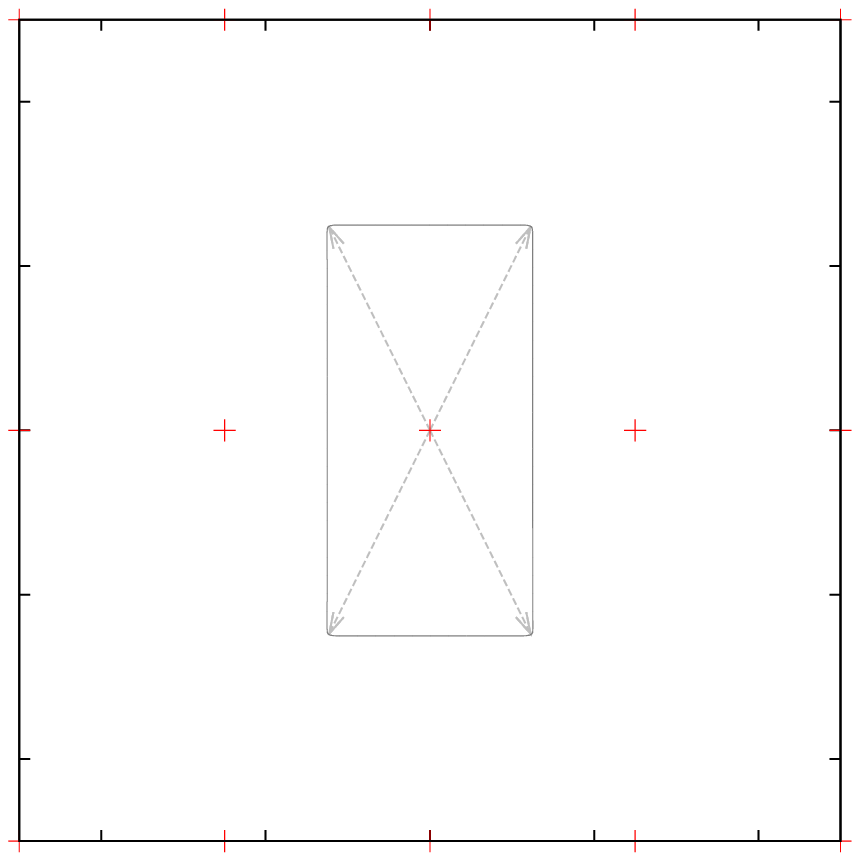}
}

\subfloat[Hexagonal grid.]
{
	\includegraphics[scale=.8]{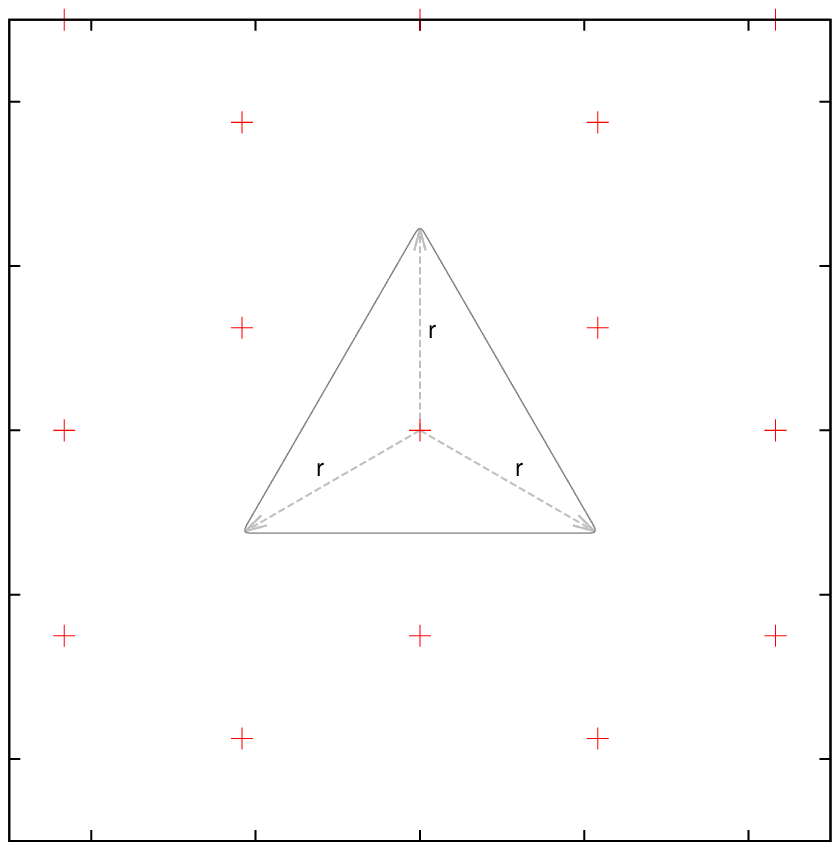}
}
\subfloat[Triangular grid.]
{
	\includegraphics[scale=.8]{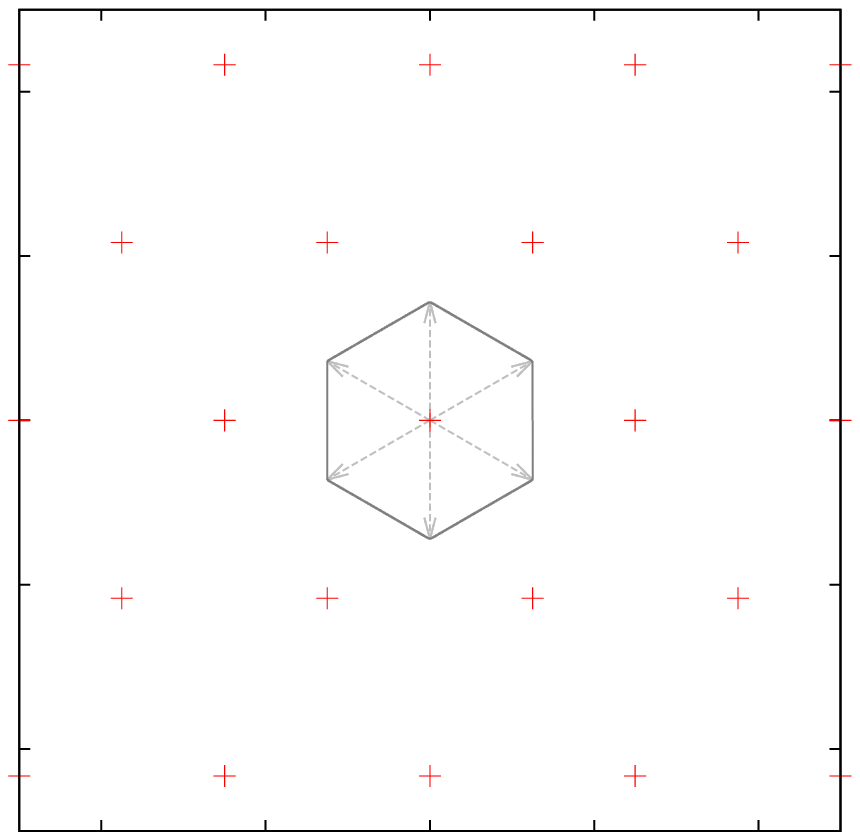}
}
\caption{Reception area of a transmitter with various grid pattern based schemes. Grids are constructed with parameter $d$. $\beta=1.0$ and $\alpha=100.0$.
\label{range_fig:asym_alpha}}
\end{figure*}

As $\alpha$ approaches infinity, reception area of a transmitter in square grid pattern based scheme approaches the shape of a square with area $d^2$ and this is irrespective of the value of $\beta$. This can also be observed in Fig. \ref{range_fig:asym_alpha}(a). From Voronoi tessellation of the network map, we can see that the density of the set of simultaneous transmitters $\lambda$ is equal to $1/d^2$ and the maximum transmission range is along the diagonals of the square reception area, given by \mbox{$r_{\lambda}=d/\sqrt{2}$}. Therefore, as $\alpha$ approaches infinity, the normalized maximum transmission range of a transmitter in square grid pattern based scheme approaches the value of
$$
r_1=\sqrt{\lambda}r_{\lambda}=\frac{1}{\sqrt{2}}\approx 0.707~.
$$
This result can also be verified by the plot in Fig. \ref{range_fig:range_grids_}(c) when $\alpha=100.0$. 

In case of rectangular grid pattern based scheme, as $\alpha$ approaches infinity, the reception area approaches the shape of a rectangle with area equal to $k_1k_2d^2$ and \mbox{$\lambda=1/k_1k_2d^2$} where $k_1$ and $k_2$ are the factors associated with the dimensions of the rectangular grid pattern as shown in Fig. \ref{schemes_fig:grid_layouts}. In this case also, the maximum transmission range is along the diagonals of the rectangular reception area and is given by 
\begin{align*}
r_{\lambda}&=\frac{d}{2}\sqrt{k_1^2+k_2^2}\\
&=\frac{\sqrt{k_1k_2}d}{2}\sqrt{\frac{\left(\frac{k_1}{k_2}\right)^2+1}{\left(\frac{k_1}{k_2}\right)}}~,
\intertext{and therefore,}
r_1&=\sqrt{\lambda}r_{\lambda}=\frac{1}{2}\sqrt{\frac{\left(\frac{k_1}{k_2}\right)^2+1}{\left(\frac{k_1}{k_2}\right)}}~.
\end{align*}
It is interesting to note that in case of rectangular grid pattern based scheme, the maximum transmission range depends on the factors $k_1$ and $k_2$. Note that, in case of square grid pattern, \mbox{$k_1/k_2=1$}. However, in case of rectangular grid pattern, $k_1/k_2$ can be any value less than one and as $k_2$ approaches infinity, $k_1/k_2$ approaches zero and $r_1$ approaches infinity. Our results in Fig. \ref{range_fig:range_grids_}(d) also verify that, with $\alpha$ approaching $100$, as $k_1/k_2$ decreases, $r_1$ increases. 

Similarly, we can also repeat the calculations and derive the maximum transmission ranges with hexagonal and triangular grid pattern based schemes under asymptotic conditions. 

Table \ref{range_tbl:asym_alpha} shows the normalized maximum transmission range of all grid pattern based schemes as $\alpha$ approaches infinity. The asymptotic behavior of maximum transmission range of various grid pattern based schemes shows that as $\alpha$ increases, the normalized maximum transmission range in the wireless network can be achieved when simultaneous transmitters are positioned in a rectangular grid pattern with \mbox{$k_2\gg k_1$}. This also allows us to conjecture that asymptotically, as $\alpha$ approaches infinity, the most optimal positioning of simultaneous transmitters, to maximize the normalized transmission range in multi-hop wireless networks, is in a rectangular grid pattern with $k_2$ approaching infinity or, in other words, {\em linear pattern} with transmitters equidistantly spaced on a straight line. Moreover, as the density of nodes in the network approaches infinity, $k_1$ (the distance in-between simultaneous transmitters on the line) approaches an infinitesimally small number. 

\begin{table}
\begin{center}
\begin{tabular}{|c|c|c|c|}
\cline{2-4}
\multicolumn{1}{c|}{} & & &  \\
\multicolumn{1}{c|}{} & {\em $\lambda$}     &  {\em $r_{\lambda}$}  & {\em $r_1=\sqrt{\lambda}r_{\lambda}$}\\
\multicolumn{1}{c|}{} & & &  \\
   \hline

\multicolumn{1}{|c|}{} & & &  \\
\multicolumn{1}{|c|}{{\em Square}}        & $\frac{1}{d^2}$ & $\frac{d}{\sqrt{2}}$  &    $\frac{1}{\sqrt{2}}\approx 0.707$   \\
\multicolumn{1}{|c|}{} & & &  \\
    \hline
\multicolumn{1}{|c|}{} & & &  \\
\multicolumn{1}{|c|}{\em Rectangular}  &  $\frac{1}{k_1k_2d^2}$  &  $\frac{\sqrt{k_1k_2}d}{2}\sqrt{\frac{\left(\frac{k_1}{k_2}\right)^2+1}{\left(\frac{k_1}{k_2}\right)}}$  &   $\frac{1}{2}\sqrt{\frac{\left(\frac{k_1}{k_2}\right)^2+1}{\left(\frac{k_1}{k_2}\right)}}$  \\
\multicolumn{1}{|c|}{$\left(\frac{k_1}{k_2}<1\right)$} & & & \\
\multicolumn{1}{|c|}{} & & &  \\
    \hline
\multicolumn{1}{|c|}{} & & &  \\
\multicolumn{1}{|c|}{{\em Hexagonal}}        &   $\frac{4}{3\sqrt{3}d^2}$  & $d$ &   $\frac{2}{\sqrt{3\sqrt{3}}}\approx 0.877$  \\
\multicolumn{1}{|c|}{} & & &  \\
       \hline
\multicolumn{1}{|c|}{} & & &  \\
\multicolumn{1}{|c|}{{\em Triangular}}         & $\frac{2}{\sqrt{3}d^2}$  & $\frac{d}{\sqrt{3}}$ &   $\sqrt{\frac{2}{3\sqrt{3}}}\approx 0.620$ \\
\multicolumn{1}{|c|}{} & & &  \\
       \hline
\end{tabular}
\end{center}
\caption{Maximum transmission range of a transmitter with various grid pattern based schemes as $\alpha$ approaches infinity.
\label{range_tbl:asym_alpha}}
\end{table}

\section{Impact of Fading on the Performance of Medium Access Control Schemes}
\label{sec:impact_fading}

So far, in our discussion, we have used the {\em no fading} channel model. In this section, we will discuss the impact of fading on the performance of slotted ALOHA and grid pattern based MAC schemes.

In case of fading, one cannot define ${\cal A}_{i}({\cal S},\beta,\alpha)$ deterministically. Therefore, we will define it as
$$
|{\cal A}_{i}({\cal S},\beta,\alpha)|=\int\Pr(\gamma_i({\bf z})\geq \beta\sum_{j\neq i,{\bf z}_j\in{\cal S}}\gamma_j({\bf z}))d{\bf z}^2~,
$$
where the integration is over the network area ${\cal A}$. 

Note that $|{\cal A}_i|$ is the size of the surface area of ${\cal A}_i$, \mbox{$\Pr(\gamma_i({\bf z})\geq \beta\sum_{j\neq i,{\bf z}_j\in{\cal S}}\gamma_j({\bf z}))$} is the probability that the signal from transmitter $i$ is received successfully at point ${\bf z}$ and we recall from \eqref{schemes_eq:channel_gain} that under fading
$$
\gamma_i({\bf z})=\frac{F_i({\bf z})}{\|\mathbf{z}-\mathbf{z}_{i}\|^{\alpha}}~,
$$
where $F_i({\bf z})$ is an {\em i.i.d.} random variable which represents the fading of signal from transmitter $i$ to receiver located at point ${\bf z}$. 

\subsection{Slotted ALOHA Scheme}
\label{sec:impact_fading_aloha}

\begin{theorem}
In case of slotted ALOHA scheme, under Rayleigh fading channel model, the probability of successfully transmitting a packet to a receiver at distance r is
\begin{align}
p(\lambda,r,\beta,\alpha)&=\Pr({\cal W}(\lambda)<xF)\notag\\
&=\underset{n\geq0}{\sum}\frac{(-C\lambda)^{n}}{n!}\frac{\sin(\pi n\gamma)}{\pi}\Gamma(n\gamma)\psi(-n\gamma)\left(\frac{r^{-\alpha}}{\beta}\right)^{-n\gamma}~,
\label{range_eq:prob_aloha_fading}
\end{align}
where $C=\pi\psi(\gamma)\Gamma(1-\gamma)$ and $\psi(s)=\E[F^s]$.
\end{theorem}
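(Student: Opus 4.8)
The plan is to mirror the proof of the no-fading theorem, tracking the two places where the fading factor now intervenes. First I would record the success event. Under Rayleigh fading the desired link carries its own fading $F_i\equiv F$ and each interferer $j$ carries an independent copy $F_j$, so condition \eqref{schemes_eq:sinr_condition} at distance $r$ reads $F\,r^{-\alpha}\ge\beta\sum_{j\neq i}F_j\|{\bf z}-{\bf z}_j\|^{-\alpha}$. Writing ${\cal W}(\lambda)=\sum_{j\neq i}F_j\|{\bf z}-{\bf z}_j\|^{-\alpha}$ for the faded aggregate interference and $x=r^{-\alpha}/\beta$, this event is exactly ${\cal W}(\lambda)<xF$, where $F$ is independent of the point process and of all the $F_j$. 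Hence $p(\lambda,r,\beta,\alpha)=\Pr({\cal W}(\lambda)<xF)$, which is the first equality in the statement.

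Next I would recompute the Laplace transform $\tilde w(\theta,\lambda)=\E[\exp(-\theta{\cal W}(\lambda))]$ of the faded interference, repeating the sub-rectangle and Poisson-independence argument of the earlier proof, the only change being that the single-transmitter contribution is now averaged over its fading: the contribution of a cell at distance $r$ becomes $\exp(-\lambda\,dx\,dy)+(1-\exp(-\lambda\,dx\,dy))\,\E[e^{-\theta F r^{-\alpha}}]$. Taking logarithms and using the first-order estimate gives $\log\tilde w(\theta,\lambda)=2\pi\lambda\int(\E[e^{-\theta F r^{-\alpha}}]-1)r\,dr$. I would then interchange the $F$-average with the radial integral and invoke the no-fading computation with $\theta$ replaced by $\theta F$, which produces $-\tfrac12\Gamma(1-\gamma)(\theta F)^{\gamma}$ inside the expectation; averaging and using $\E[F^{\gamma}]=\psi(\gamma)$ yields $\tilde w(\theta,\lambda)=\exp(-\lambda C\theta^{\gamma})$ with $C=\pi\psi(\gamma)\Gamma(1-\gamma)$, exactly the constant announced in the statement.

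Because $\tilde w(\theta,\lambda)$ still has the form $\exp(-\lambda C\theta^{\gamma})$, the inverse-Laplace and contour-bending computation of the no-fading theorem applies verbatim and gives the CDF of the faded interference, $\Pr({\cal W}(\lambda)<x')=\sum_{n\ge0}\frac{(-C\lambda)^{n}}{n!}\frac{\sin(\pi n\gamma)}{\pi}\Gamma(n\gamma)(x')^{-n\gamma}$, now with the fading-corrected $C$. Finally I would condition on $F$, substitute $x'=xF$, and take the expectation over $F$; since $F$ is independent of ${\cal W}(\lambda)$, each term picks up a factor $\E[F^{-n\gamma}]=\psi(-n\gamma)$, and with $x=r^{-\alpha}/\beta$ this is precisely \eqref{range_eq:prob_aloha_fading}.

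The main obstacle is analytic rather than structural: one must justify the two interchanges — between the $F$-average and the radial integral in the Laplace step, and the term-by-term $F$-expectation inside the infinite series in the final step. For Rayleigh fading $\psi(s)=\E[F^{s}]=\Gamma(1+s)$, so $\psi(-n\gamma)=\Gamma(1-n\gamma)$ has poles at $n\gamma\in\{1,2,\dots\}$; these are harmless because the reflection formula $\Gamma(n\gamma)\Gamma(1-n\gamma)=\pi/\sin(\pi n\gamma)$ cancels them against the zeros of $\sin(\pi n\gamma)$, keeping every term finite, while the alternating prefactor $(-C\lambda)^{n}/n!$ controls convergence. The conceptual point worth stressing is that the two copies of the fading law enter asymmetrically: the interferers' fading is absorbed into the constant $C$ through $\psi(\gamma)$, whereas the desired link's fading surfaces as the per-term weight $\psi(-n\gamma)$.
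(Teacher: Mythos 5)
Your proposal is correct and takes essentially the same route as the paper: the paper's (much terser) proof likewise redoes the Poisson--Laplace computation with the interferers' fading averaged into the exponent, giving $\tilde w(\theta,\lambda)=\exp\bigl(-\lambda\pi\Gamma(1-\gamma)\psi(\gamma)\theta^{\gamma}\bigr)$, and then reuses the no-fading inversion with the desired link's fading entering term-by-term as $\psi(-n\gamma)$ --- exactly your asymmetry between $\psi(\gamma)$ in $C$ and $\psi(-n\gamma)$ as per-term weights. The only blemish is a transcription slip: the no-fading step yields $-\pi\Gamma(1-\gamma)(\theta F)^{\gamma}$ inside the expectation, not $-\tfrac{1}{2}\Gamma(1-\gamma)(\theta F)^{\gamma}$, which is consistent with the constant $C=\pi\psi(\gamma)\Gamma(1-\gamma)$ you correctly announce.
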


\begin{proof}
We recall that in case of slotted ALOHA, under {\em no fading} channel model, the Laplace transform of the density function of the random variable ${\cal W}({\bf z},{\cal S})$ which is the received interference because of Poisson distributed transmitters in the network is given by \eqref{local_capacity_eq:laplace}. We can follow the same steps to arrive at the expression of $\tilde{w}(\theta,\lambda)$ under {\em Rayleigh fading} channel model as
$$
\tilde{w}(\theta,\lambda)=\exp\left(-\lambda \pi \Gamma\left(1-\frac{2}{\alpha}\right)\psi\left(\frac{2}{\alpha}\right)\theta^{\frac{2}{\alpha}}\right)~,
$$
where $\psi(\frac{2}{\alpha})=\E[F^{\frac{2}{\alpha}}]$. Using similar techniques and the inverse Laplace transform, we can derive the following identity
$$
\Pr({\cal W}(\lambda)<xF)=\underset{n\geq0}{\sum}\frac{(-C\lambda)^{n}}{n!}\frac{\sin(\pi n\gamma)}{\pi}\Gamma(n\gamma)\psi(-n\gamma)x^{-n\gamma}~,
$$
where $C=\pi\psi(\gamma)\Gamma(1-\gamma)$, $\gamma=\frac{2}{\alpha}$ and $\psi(s)=\E[F^s]$.
\end{proof}
If we consider that the random variable $F$ is expressed as \mbox{$F=e^{-u}$}, where $u$ is a random variable, we get
$$
\psi(s)=\E[F^s]=\int e^{-xs} \phi_u(x) dx~,
$$
where $\phi_u(x)$ is the probability density of $u$. If we assume that $u$ is uniformly distributed in the interval $[-f,+f]$, we have
$$
\psi(s)=\E[F^s]=\int e^{-xs}\phi_u(x)dx=\frac{1}{2f}\int e^{-xs}dx=\frac{1}{fs}\sinh(f s)~.
$$

\begin{figure}[!t]
\centering
\psfrag{r}{$r$}
\includegraphics[scale=0.9]{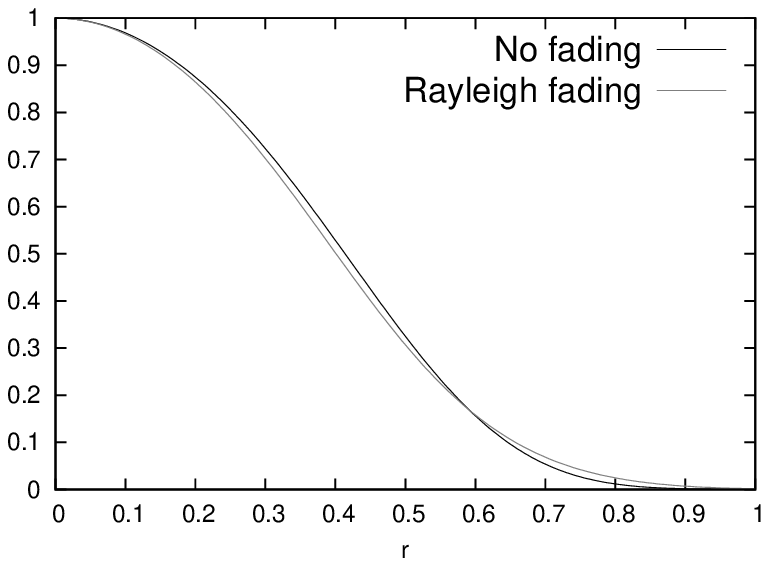}
\caption{\mbox{$p(1,r,\beta,\alpha)$} versus $r$ of slotted ALOHA under {\em no fading} and {\em Rayleigh fading} channel models. \mbox{$\lambda=1$}, \mbox{$\beta=1.0$} and \mbox{$\alpha=4.0$}.
\label{range_fig:dist_aloha_fading}}
\end{figure}

\begin{figure*}[!t]
\centering
\subfloat[$\beta$ is varying and $\alpha=4.0$.]
{
	\hspace{-1.25cm}
\includegraphics[scale=.9]{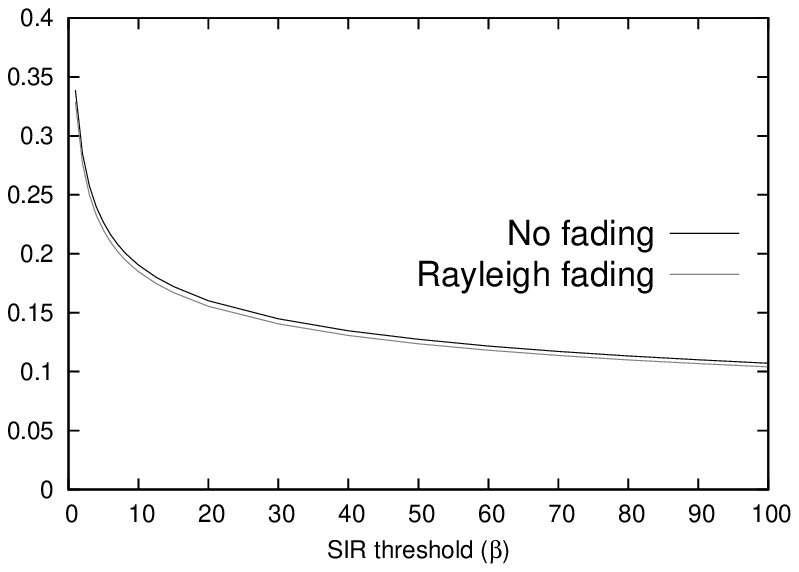}
}
\subfloat[$\beta=10.0$ and $\alpha$ is varying.]
{
	\hspace{-1.25cm}
\includegraphics[scale=.9]{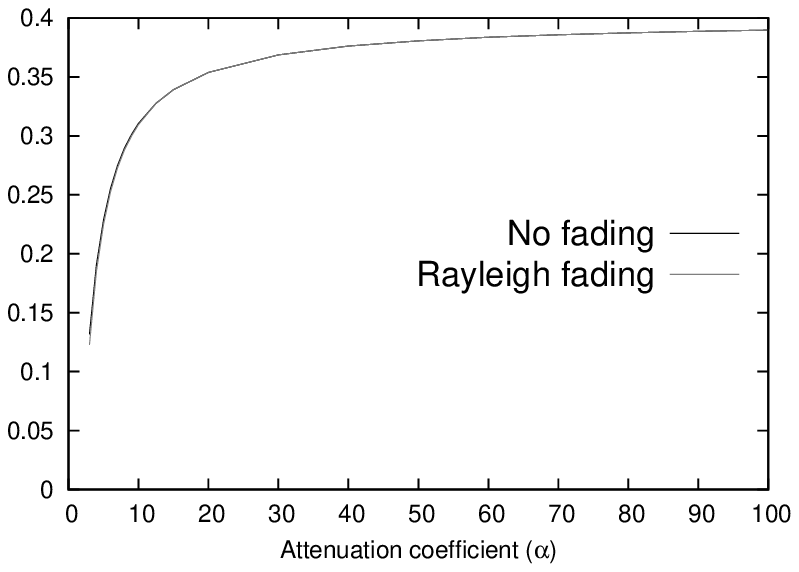}
}
\caption{Optimal $r_1=\sqrt{\lambda}r_{\lambda}$ of slotted ALOHA under {\em no fading} and {\em Rayleigh fading} channel models with $\lambda=1$.
\label{range_fig:range_aloha_fading}}
\end{figure*}

\begin{figure*}[!t]
\centering
\subfloat[$\beta$ is varying and $\alpha=4.0$.]
{
	\hspace{-1.25cm}
\includegraphics[scale=.9]{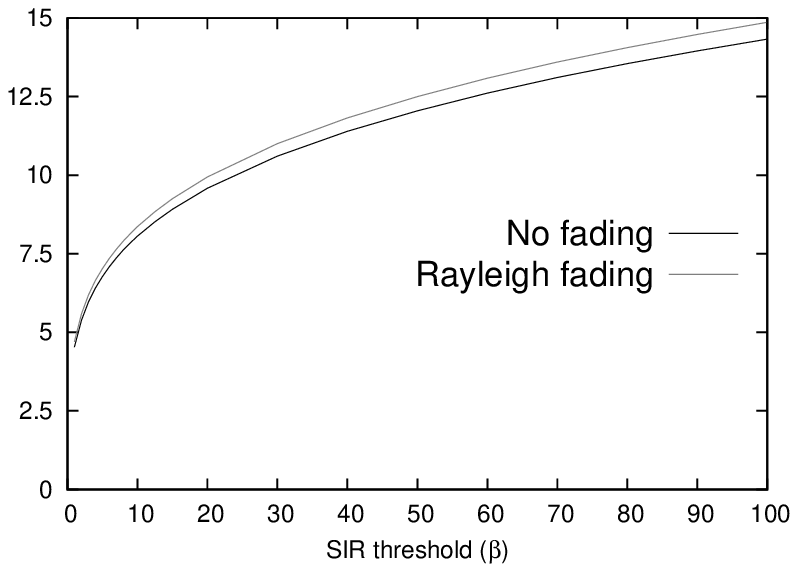}
}
\subfloat[$\beta=10.0$ and $\alpha$ is varying.]
{
	\hspace{-1.25cm}
\includegraphics[scale=.9]{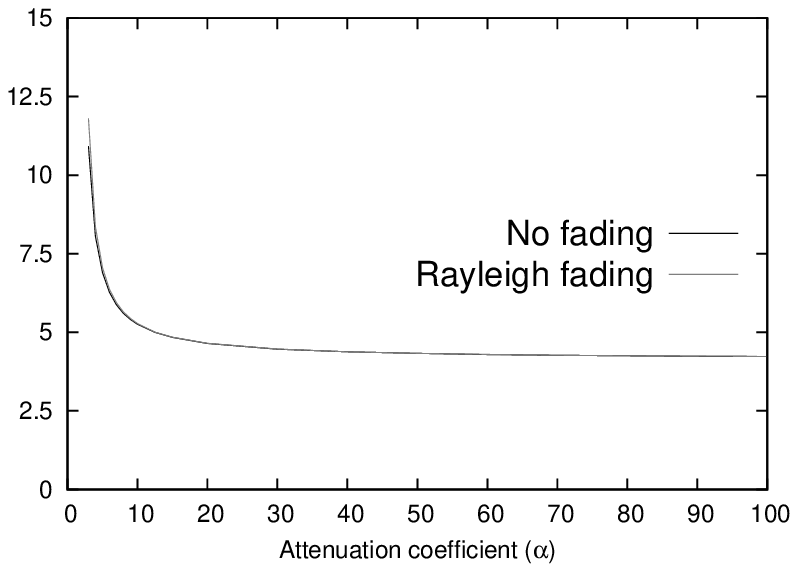}
}
\caption{Optimal (minimized) $\frac{1}{rp(\lambda, r, \beta, \alpha)}$ of slotted ALOHA under {\em no fading} and {\em Rayleigh fading} channel models with $\lambda=1$.
\label{range_fig:op_rp_aloha}}
\end{figure*}

Figure \ref{range_fig:dist_aloha_fading} shows the comparison of \mbox{$p(\lambda,r,\beta,\alpha)$} for slotted ALOHA under {\em no fading} and {\em Rayleigh fading} channel models  with $\lambda=1$, $\beta=1.0$ and $\alpha=4.0$. We recall that, in case of {\em no fading} channel model, \mbox{$F=1$} and \mbox{$p(\lambda,r,\beta,\alpha)$} is plotted using \eqref{local_capacity_eq:signal_pd} whereas, in case of {\em Rayleigh fading} channel model, \mbox{$F=e^u$} where $u$ is uniformly distributed in the interval $[-1,+1]$ and \mbox{$p(\lambda,r,\beta,\alpha)$} is plotted using \eqref{range_eq:prob_aloha_fading}. At smaller distances, the probability of successfully transmitting a packet is better under {\em no fading} as compared to {\em Rayleigh fading} channel model. However, as the value of $r$ increases, it is interesting to note that the situation is reversed and \mbox{$p(\lambda,r,\beta,\alpha)$} becomes slightly higher under {\em Rayleigh fading} channel model.

Figure \ref{range_fig:range_aloha_fading} shows the optimum transmission range of slotted ALOHA under {\em no fading} and {\em Rayleigh fading} channel models. It can be noticed in Fig. \ref{range_fig:range_aloha_fading}(a) that at typical value of $\alpha$ equal to $4.0$ and $\beta$ varying, the optimum transmission range is almost $3\%$ lower under {\em Rayleigh fading} channel model as compared to {\em no fading}. However, with $\beta$ fixed at $10$, Fig. \ref{range_fig:range_aloha_fading}(b) shows that, as $\alpha$ increases and approaches infinity, the optimum transmission range becomes equal under the {\em no fading} and {\em Rayleigh fading} channel models. Similarly, Fig. \ref{range_fig:op_rp_aloha} shows the optimized minimum number of transmissions required to deliver a packet to its destination located at unit distance from the source. The results in both figures show that though fading has an impact on the following parameters: optimum transmission range and network capacity with slotted ALOHA however, the impact is minimal on the optimized values of these parameters.

\subsection{Grid Pattern Based Medium Access Control Schemes}
\label{sec:impact_fading_grids}

We discussed earlier that under {\em Rayleigh fading}, the reception area of a transmitter in a grid pattern based scheme cannot be defined deterministically and we cannot deterministically evaluate the optimum transmission range as well. The purpose of this section is to discuss the impact of fading on the transmission range of grid pattern based schemes. Therefore, we will set a simplified framework for our analysis and use this framework to compare the transmission range of a transmitter under {\em no fading} and {\em Rayleigh fading} channel models. 

For our discussion in this section, we will consider {\em w.l.o.g.} the transmitter $i$ located at the origin.

\begin{theorem}
In case of grid pattern based schemes, under no fading channel model, the probability of successfully transmitting a packet from a transmitter i to a receiver located at distance r at point  ${\bf z}$ is
\begin{equation}
p(\lambda,r,\beta,\alpha)=H\left[\frac{r^{-\alpha}}{\beta}-I\right]~,
\label{range_eq:prob_grid_nofade}
\end{equation}
where $H[.]$ is the Heaviside function, $r=\|{\bf z}-{\bf z}_i\|$ and $I=\sum\limits_{j\neq i,{\bf z}_j\in {\cal S}}\frac{1}{\|{\bf z}-{\bf z}_j\|^{\alpha}}$. 
\end{theorem}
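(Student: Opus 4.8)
The plan is to observe that, under the no fading channel model, the channel gains are deterministic quantities, so the probability of successful reception degenerates to a zero--one indicator and the statement is just a re-expression of the SIR criterion. Concretely, I would start from the transmission success condition \eqref{schemes_eq:sinr_condition} with $N_0=0$ and $P_i=1$, equivalently from the SIR expression \eqref{schemes_eq:sinr}, and substitute the no fading gain $\gamma_i({\bf z})=\|{\bf z}-{\bf z}_i\|^{-\alpha}$. Placing transmitter $i$ at the origin so that $\|{\bf z}-{\bf z}_i\|=\|{\bf z}\|=r$, the success condition $S_i({\bf z})\geq\beta$ reads $r^{-\alpha}/I\geq\beta$, where $I=\sum_{j\neq i,{\bf z}_j\in{\cal S}}\|{\bf z}-{\bf z}_j\|^{-\alpha}$ is precisely the interference level appearing in the theorem.

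Next I would rearrange this inequality. Since $\beta>0$ and $I>0$, clearing denominators is monotone, so the success condition is equivalent to $\frac{r^{-\alpha}}{\beta}\geq I$, i.e. $\frac{r^{-\alpha}}{\beta}-I\geq 0$. Because under no fading the received and interfering powers are fixed once the grid ${\cal S}$ is fixed, no randomness remains: the packet is received if and only if ${\bf z}$ lies in the reception area ${\cal A}_i({\cal S},\beta,\alpha)$ of \eqref{local_capacity_eq:rx_area}, in which case the probability equals one, and it equals zero otherwise. This is exactly the dichotomy already recorded for the grid schemes in \S\ref{range_sec:grids}, now written in terms of the single scalar margin $\frac{r^{-\alpha}}{\beta}-I$.

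Finally I would encode this zero--one map as the Heaviside function evaluated at that margin, which yields \eqref{range_eq:prob_grid_nofade}. There is no substantive analytic obstacle; the only point requiring a little care is the boundary case $\frac{r^{-\alpha}}{\beta}-I=0$. The reception condition in \eqref{local_capacity_eq:rx_area} uses a non-strict inequality ($\geq\beta$), so the boundary curve ${\cal C}_i({\cal S},\beta,\alpha)$ belongs to ${\cal A}_i$; to remain consistent with this inclusion I would adopt the tie-breaking convention $H[0]=1$, which is the one matching the closed reception region. Making this convention explicit is the only nontrivial element of an otherwise immediate derivation.
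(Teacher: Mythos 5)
Your proposal is correct and follows essentially the same route as the paper's own (very brief) proof: substitute the deterministic no-fading gain into the SIR condition \eqref{schemes_eq:sinr_condition}, observe that success is a zero--one event, and encode it via the Heaviside function. Your explicit attention to the boundary convention $H[0]=1$ is a minor refinement the paper leaves implicit, but it does not change the argument.
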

\begin{proof}
The proof of this is rather intuitive. Under {\em no fading} channel model, if the receiver at point ${\bf z}$ lies within the deterministic ${\cal A}_i({\cal S},\beta,\alpha)$, it will always receive the packet with probability one under the given orientation of the grid pattern and {\em no fading} because (using \eqref{schemes_eq:sinr_condition} with $N_0=0$ and $P_i=1$ (for \mbox{$i=0,1,\ldots,k,\ldots$}))
$$
\frac{1}{\|{\bf z}-{\bf z}_i\|^{\alpha}}\geq \beta \sum\limits_{j\neq i,{\bf z}_j\in {\cal S}}\frac{1}{\|{\bf z}-{\bf z}_j\|^{\alpha}}~.
$$
\end{proof}

In case of {\em Rayleigh fading} channel model, we can write \eqref{schemes_eq:sinr_condition} with $N_0=0$ and $P_i=1$ (for {$i=0,1,\ldots,k,\ldots$}) as
$$
\frac{F_i}{\|{\bf z}-{\bf z}_i\|^{\alpha}}\geq \beta \sum\limits_{j\neq i,{\bf z}_j\in {\cal S}}\frac{F_j}{\|{\bf z}-{\bf z}_j\|^{\alpha}}~.
$$
We recall that $F_k$ is an {\em i.i.d.} random variable but, in the following discussion, we will keep the subscripts for the purpose of clarification. We consider that the variations in the signal level cause by {\em Rayleigh fading} are viewed as {\em i.i.d.} variations in their respective transmit power level which also follow the distribution of the random variable $F\equiv F_k$.

\begin{theorem}
In case of grid pattern based schemes, under Rayleigh fading channel model, the probability of successfully transmitting a packet from a transmitter i to a receiver located at distance r at point  ${\bf z}$ is
\begin{equation}
p(\lambda,r,\beta,\alpha)=\prod\limits_{j\neq i,j\in{\cal S}}\left(\frac{1}{1+\beta\frac{\|{\bf z}-{\bf z}_j\|^{-\alpha}}{\|{\bf z}-{\bf z}_i\|^{-\alpha}}}\right)~.
\label{range_eq:prob_grid_fading}
\end{equation}
\end{theorem}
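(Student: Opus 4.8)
The plan is to single out the desired-signal fading $F_i$ and condition on all the interfering fading variables, exploiting two elementary facts about a mean-one exponential random variable $F$: its tail $\Pr(F\ge t)=e^{-t}$ for $t\ge 0$ and its Laplace transform $\E[e^{-sF}]=1/(1+s)$ for $s\ge 0$. First I would rewrite the \emph{Rayleigh fading} success condition displayed just above the statement,
$$
\frac{F_i}{\|{\bf z}-{\bf z}_i\|^{\alpha}}\ge \beta\sum_{j\neq i,{\bf z}_j\in{\cal S}}\frac{F_j}{\|{\bf z}-{\bf z}_j\|^{\alpha}}~,
$$
as a threshold condition on $F_i$ alone, namely $F_i\ge \beta\,\|{\bf z}-{\bf z}_i\|^{\alpha}\sum_{j\neq i}F_j\|{\bf z}-{\bf z}_j\|^{-\alpha}$. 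Conditioning on the family $\{F_j\}_{j\neq i}$ and using that $F_i$ is an independent mean-one exponential, the conditional success probability is exactly $\exp\bigl(-\beta\|{\bf z}-{\bf z}_i\|^{\alpha}\sum_{j\neq i}F_j\|{\bf z}-{\bf z}_j\|^{-\alpha}\bigr)$.

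Next I would average over the interfering fading variables. Since the $F_j$ are \emph{i.i.d.} and independent across the interferers, the exponential of the sum splits into a product, and expectation passes through the product to give
$$
p(\lambda,r,\beta,\alpha)=\prod_{j\neq i,{\bf z}_j\in{\cal S}}\E\!\left[\exp\!\left(-\beta\frac{\|{\bf z}-{\bf z}_i\|^{\alpha}}{\|{\bf z}-{\bf z}_j\|^{\alpha}}F_j\right)\right]~.
$$

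Finally I would evaluate each factor with the exponential Laplace transform $\E[e^{-sF}]=1/(1+s)$ at $s=\beta\|{\bf z}-{\bf z}_i\|^{\alpha}/\|{\bf z}-{\bf z}_j\|^{\alpha}=\beta\|{\bf z}-{\bf z}_j\|^{-\alpha}/\|{\bf z}-{\bf z}_i\|^{-\alpha}$, which reproduces term by term the product in \eqref{range_eq:prob_grid_fading} and completes the proof.

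The computation itself is routine; the only points demanding care are the two independence arguments — that $F_i$ is independent of the interfering set, so that conditioning yields a clean exponential tail, and that the $\{F_j\}$ are mutually independent, so that $\E$ factorizes over $j$. Because the grid is deterministic here (the positions ${\bf z}_j$ are fixed modulo the rotation of the current slot), no averaging over point-process configurations is required, so the main difficulty of the ALOHA analysis — evaluating a Laplace functional over a random set — does not appear. What remains is the mild issue that the product is over an infinite grid; its convergence to a strictly positive value follows from $\sum_{j\neq i}\|{\bf z}-{\bf z}_j\|^{-\alpha}<\infty$, which holds on a $2D$ grid precisely because $\alpha>2$, so I would note this to justify that the infinite product is well defined.
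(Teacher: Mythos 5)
Your proof is correct and takes essentially the same route as the paper's: both reduce the success probability, via the unit-mean exponential tail of $F_i$, to $\E\left[e^{-\beta G}\right]$ with $G=\sum_{j\neq i}w_jF_j$ and $w_j=\|{\bf z}-{\bf z}_j\|^{-\alpha}/\|{\bf z}-{\bf z}_i\|^{-\alpha}$, and then factorize over the independent $F_j$ using the exponential Laplace transform $1/(1+\beta w_j)$ --- the paper phrases this factorization through the MGF of $G$ while you condition on $\{F_j\}$ first, which is the same computation in a different order. Your added remark justifying convergence of the infinite product via $\sum_{j\neq i}\|{\bf z}-{\bf z}_j\|^{-\alpha}<\infty$ for $\alpha>2$ is a small point the paper leaves implicit.
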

\begin{proof}
We know that
\begin{align*}
p(\lambda,r,\beta,\alpha)&=\Pr\left(\frac{F_i}{\|{\bf z}-{\bf z}_i\|^{\alpha}}\geq \beta \sum\limits_{j\neq i,{\bf z}_j\in {\cal S}}\frac{F_j}{\|{\bf z}-{\bf z}_j\|^{\alpha}}\right)\\
&=\Pr\left(\frac{F_i}{\sum\limits_{j\neq i,{\bf z}_j\in {\cal S}}F_j\frac{\|{\bf z}-{\bf z}_j\|^{-\alpha}}{\|{\bf z}-{\bf z}_i\|^{-\alpha}}}\geq \beta\right)\\
&=\Pr\left(\frac{F_i}{\sum\limits_{j\neq i,{\bf z}_j\in {\cal S}}w_j F_j}\geq \beta\right)~,\\
\intertext{where $w_j=\frac{\|{\bf z}-{\bf z}_j\|^{-\alpha}}{\|{\bf z}-{\bf z}_i\|^{-\alpha}}$ and $$G=\sum\limits_{j\neq i,{\bf z}_j\in {\cal S}}w_j F_j~,$$ is the {\em weighted} interference which is expressed as the weighted sum of independent random variables $F_j$.}
\intertext{Let us denote the Moment Generating Function (MGF) of $G$ by $M_G(s)$ and, using the definition of MGF of {\em Rayleigh fading}, we have $$M_G(s)=\prod\limits_{j\neq i,{\bf z}_j\in {\cal S}}M_{F_j}(w_j s)=\prod\limits_{j\neq i,{\bf z}_j\in {\cal S}}\E[e^{w_j s F_j}]=\prod\limits_{j\neq i,{\bf z}_j\in {\cal S}}\frac{1}{1-w_j s}~.$$} 
\intertext{Therefore, we can derive the probability of successful reception from transmitter $i$ at receiver located at distance $r$ at point ${\bf z}$ as}
p(\lambda,r,\beta,\alpha)&=\Pr(F_i\geq \beta G)\\
&=\int\limits_{x=0}^\infty \int\limits_{y=0}^\infty \phi_F(x) \phi_G(y).\mathds{1}_{x\geq \beta y} dx dy~, \\
\intertext{where $\phi_F(x)$ and $\phi_G(y)$ are the probability density functions of the random variables $F\equiv F_i$ and $G$ respectively. Therefore, we have}
p(\lambda,r,\beta,\alpha)&=\int\limits_{y=0}^\infty \int\limits_{x=\beta y}^\infty \phi_F(x) dx \phi_G(y) dy \\
&=\int\limits_{y=0}^\infty \exp(-\beta G) \phi_G(y) dy \\
&=\E[\exp(-\beta G)]\\
&=M_G(-s)\Big|_{s=\beta}=\prod\limits_{j\neq i,{\bf z}_j\in {\cal S}}\frac{1}{1+w_j \beta}~,
\end{align*}
which completes the proof.
\end{proof}

Now we will use our results to compute the probability of successful transmission with grid pattern based schemes under {\em no fading} and {\em Rayleigh fading} channel models. {\em W.l.o.g.}, we will only consider the transmitter $i$ located at the origin and we will use \eqref{range_eq:prob_grid_nofade} and \eqref{range_eq:prob_grid_fading} to compute the probability of successful reception when the receiver is at distance $r$ from transmitter $i$. We will use the same network setting as described in \S \ref{range_sec:grid_pattern} and we will use the square grid pattern based scheme. We assume that the set ${\cal S}$ has density $\lambda=1$. Therefore, $d=\sqrt{1/\lambda}=1$. 

We consider that point ${\bf z}$ lies on the line joining $(0,0)$ and $(1,1)$ and we numerically compute the probability of successful reception at distance $r$ from transmitter $i$ using \eqref{range_eq:prob_grid_nofade} and \eqref{range_eq:prob_grid_fading} as point ${\bf z}$ moves on the straight line, from $(0,0)$ to $(1,1)$. Figure \ref{range_fig:dist_grids} shows the comparison of $p(1,r,\beta,\alpha)$ with square grid pattern based scheme under {\em no fading} and {\em Rayleigh fading}. In contrast to the performance of slotted ALOHA in Fig. \ref{range_fig:dist_aloha_fading}, we notice that fading can significantly degrade the performance of grid pattern based schemes.

\begin{figure}[!t]
\centering
\psfrag{r}{$r$}
\includegraphics[scale=0.9]{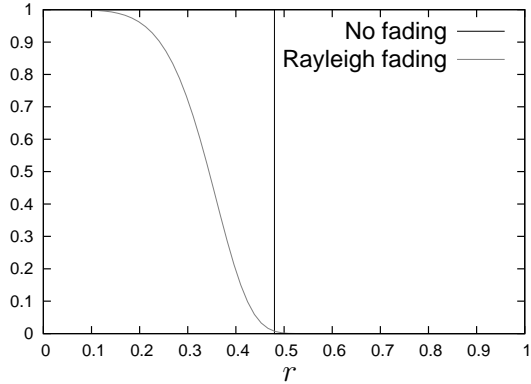}
\caption{\mbox{$p(1,r,\beta,\alpha)$} versus $r$ of square grid pattern based scheme under {\em no fading} and {\em Rayleigh fading} channel models.  \mbox{$\lambda=1$}, $\beta=1.0$ and \mbox{$\alpha=4.0$}.
\label{range_fig:dist_grids}}
\end{figure}

\section{Conclusions}
\label{sec:conclude}

We studied the problem of optimizing the capacity of multi-hop wireless network via geometric analysis of various MAC schemes. We evaluated the performance of network under the framework of normalized optimum transmission range and computed the optimized number of retransmissions required to transport a packet over multi-hop path to its final destination. We used analytical tools, based on realistic interference model, to evaluate the performance of slotted ALOHA and grid pattern based schemes. Our results show that designing of the MAC scheme to optimize these quantities shall take into account the system parameters like SIR threshold and attenuation coefficient. Our results also show that under {\em no fading}, at typical values of SIR threshold equal to $10.0$ and attenuation coefficient equal to $4.0$, most optimal scheme is based on triangular grid pattern and compared to slotted ALOHA, which does not use any significant protocol overhead, triangular grid pattern based scheme can only improve the normalized optimum transmission range and network capacity by factors of {\em two} and {\em three} respectively. We also studied the performance of these schemes under {\em Rayleigh fading} and showed that this gap in performance is even closer. As a future work, it will also be interesting to see how does the performance of node coloring and CSMA schemes compare with the performance of the MAC schemes discussed in this article. The conclusions of this work is that improvements above ALOHA are limited in performance and maybe costly in terms of protocol overheads. Our work also promotes the idea of cross-layer optimization between MAC and routing layers and encourages that attention should be focused on optimizing existing medium access schemes and studying, designing and implementing new efficient routing strategies, {\em e.g.}, as discussed in \cite{weber08,gomez-cambell2004}. 

\bibliographystyle{hieeetr}
\bibliography{Bibliography}

\end{document}